\newcommand{\trans}{\delta}
\newcommand{\init}{\iota}
\newcommand{\alphabet}{\Sigma}
\newcommand{\states}{Q}
\newcommand{\wordletter}[2]{#1{[#2]}}
\newcommand{\pathto}[2]{{\xrightarrow[]{{#1}}}}
\newcommand{\emptyword}{\varepsilon}
\newcommand{\finwords}{\alphabet^*}
\newcommand{\poswords}{\alphabet^+}
\newcommand{\inits}{I}
\newcommand{\run}{\rho}
\newcommand{\naturals}{\mathbb{N}}
\newcommand{\size}[1]{|#1|}
\newcommand{\setnocond}[1]{\{#1\}}
\newcommand{\setcond}[2]{\{\, #1 \mid #2 \,\}}
\newcommand{\D}{\mathcal{D}}
\renewcommand{\O}{\mathcal{O}}
\newcommand{\A}{\mathcal{A}}
\newcommand{\M}{\mathcal{M}}
\newcommand{\N}{\mathcal{N}}
\renewcommand{\P}{\mathcal{P}}
\renewcommand{\S}{\mathcal{S}}
\newcommand{\T}{\mathcal{T}}
\newcommand{\posset}{S^{+}}
\newcommand{\negset}{S^{-}}
\newcommand{\odd}[1]{\textsf{odd}(#1)}
\newcommand{\even}[1]{\textsf{even}(#1)}
\newcommand{\opp}{\textsf{opp}}
\newcommand{\canoEq}{\sim}
\newcommand{\var}[1]{\textit{#1}}
\newcommand{\haschild}{\textsf{has\_children}}
\newcommand{\register}{\textsf{replace\_or\_register}}
\newcommand{\lastchild}{\textsf{last\_child}}
\newcommand{\pref}{\mathsf{prefixes}}
\newcommand{\tool}{\textsf{DFAMiner}}
\newcommand{\dfaind}{\textsf{DFA-Inductor}}
\newcommand{\dfaid}{\textsf{DFA-Identify}}
\newtheorem{fact}{Fact}
\title[\tool: Mining minimal DFAs]{\tool: Mining minimal separating DFAs from labelled samples}
\author{
\Name{Daniele Dell'Erba}\Email{dde@liverpool.ac.uk} \\
\Name{Yong Li} 
\Email{liyong@liverpool.ac.uk}\\
\Name{Sven Schewe} \Email{svens@liverpool.ac.uk}\\
\addr University of Liverpool, UK}
\begin{document}

\maketitle

\begin{abstract}
    We propose \tool, a passive learning tool for learning minimal separating deterministic finite automata (DFA) from a set of labelled samples.
    Separating automata are an interesting class of automata that occurs generally in regular model checking and has raised interest in foundational questions of parity game solving.
    We first propose a simple and linear-time algorithm that incrementally constructs a three-valued DFA (3DFA) from a set of labelled samples given in the usual lexicographical order.
    This 3DFA has accepting and rejecting states as well as don't-care states, so that it can exactly recognise the labelled examples.
    We then apply our tool to mining a minimal separating DFA for the labelled samples by minimising the constructed automata via a reduction to solving SAT problems.
    Empirical evaluation shows that our tool outperforms current state-of-the-art tools significantly on standard benchmarks for learning minimal separating DFAs from samples.
    Progress in the efficient construction of separating DFAs can also lead to finding the lower bound of parity game solving, where we show that \tool\ can create optimal separating automata for simple languages with up to 7 colours. 
    Future improvements might offer inroads to better data structures.
\end{abstract}
\begin{keywords}
Separating DFAs, Passive Learning, Three-valued DFAs, Parity Game Solving
\end{keywords}

\section{Introduction}
\label{sec:intro}

The task of inferring a minimum-size separating automaton from two disjoint sets of samples has gained much attention from various fields, including computational biology~\citep{DBLP:journals/pr/Higuera05}, inference of network invariants~\citep{DBLP:conf/cade/GrinchteinLP06}, regular model checking~\citep{DBLP:conf/atva/Neider12}, and reinforcement learning~\citep{DBLP:conf/fmcad/LaufferYVSS22}.
More recently, this problem has also arisen in the context of parity game solving~\citep{BC18}, where separating automata can be used to decide the winner.
The breakthrough quasi-polynomial algorithm \citep{CJKLS17}, for example, can be viewed as producing such a separating automaton, and under additional constraints, quasi-polynomial lower bounds can be established, too~\citep{CDFJLP19,CJKLS17}.
These applications can be formalised as seeking the minimum-size of DFAs, known as the Min-DFA inference problem, from positive and negative samples.

The Min-DFA inference problem was first explored in~\citep{DBLP:journals/tc/BiermannF72,DBLP:journals/iandc/Gold78}. 
Due to its high (NP-complete) complexity, researchers initially focused on either finding local optima through state merging techniques~\citep{DBLP:conf/icgi/LangPP98,Oncina-Garcia-92,DBLP:journals/pr/BugalhoO05}, or seeking theoretical aspects such as reduction to graph colouring problems~\citep{DBLP:conf/iwGi/CosteN97}.
Notably, it has been shown that there is no efficient algorithm to find approximate solutions~\citep{DBLP:journals/jacm/PittW93}.

With the increase in computational power and efficiency of Boolean Satisfiability (SAT) solvers, research has shifted towards practical and exact solutions to the Min-DFA inference problem.
Several tools have emerged in the literature, including \textsf{ed-beam}/\textsf{exbar}~\citep{DBLP:conf/icgi/HeuleV10}, \textsf{FlexFringe}~\citep{DBLP:conf/icsm/VerwerH17}, \dfaind~\citep{DBLP:conf/lata/UlyantsevZS15,DBLP:conf/lata/ZakirzyanovMIUM19} and \dfaid~\citep{DBLP:conf/fmcad/LaufferYVSS22}.

The current practical and exact solutions to the Min-DFA inference problem typically involve two steps:
(1) Construct the augmented prefix tree acceptor (APTA~\citep{DBLP:conf/icgi/CosteN98}) that recognises the given samples, and (2) minimise the APTA to a Min-DFA by a reduction to SAT~\citep{DBLP:conf/icgi/HeuleV10}.
Recent enhancements of this approach focus on the second step, including techniques like symmetry breaking~\citep{DBLP:conf/icgi/HeuleV10,DBLP:conf/lata/UlyantsevZS15} and compact SAT encoding~\citep{DBLP:conf/icgi/HeuleV10,DBLP:conf/lata/ZakirzyanovMIUM19}.
Additionally, there is an approach on the incremental SAT solving technique specialised for the Min-DFA inference problem, where heuristics for assigning free variables have also been proposed~\citep{DBLP:conf/sefm/AvellanedaP19}.
However, their implementation relies heavily on \textsf{MiniSAT}~\citep{DBLP:conf/sat/EenS03}.
We believe that, in order to take advantage of future improvements of SAT solvers, it is better to use a SAT solver as a black-box tool.
We note that the second step has also been encoded as a Satisfiability Modulo Theories problem~\citep{DBLP:conf/lata/SmetsersFV18}, which can also be improved by our contribution to the first step.

The second step is typically the bottleneck in the workflow.
It is known that the number of boolean variables used in the SAT problem is polynomial in the number of states of the APTA.
Smaller APTAs naturally lead to easier SAT problems.
This motivates our effort to improve the first step of the inference problem to obtain simpler SAT instances.
While previous attempts have aimed at reducing the size of APTAs~\citep{DBLP:conf/icgi/LangPP98,Oncina-Garcia-92,DBLP:journals/pr/BugalhoO05}, we introduce a new and incremental construction of the APTAs that comes with a \emph{minimality} guarantee for the acceptor of the given samples.

\paragraph*{Contributions.}
We propose employing the (polynomial-time) incremental construction of minimal acyclic DFA learning algorithm~\citep{DBLP:journals/coling/DaciukMWW00} for minimal DFAs from a given set of \emph{positive} samples.
    We offer two constructions based on it. The first consists of building two minimal DFAs, $\D^+$ and $\D^-$, for the positive samples $S^+$ and the negative ones $S^-$, respectively.
When composing them, we set as rejecting the accepting states of $\D^-$ and use the DFA pair $(\D^+, \D^-)$ as the acceptor of $S = (S^+, S^-)$.
For the second construction, our algorithm directly learns an APTA from $S$, hence considering both sets of positive and negative samples at the same time.
As a consequence, we extend the algorithm to support APTA learning from the pair of labelled samples $S$.
The obtained APTA is guaranteed to be the \emph{minimum-size deterministic} acceptor for $S$.

We have implemented these techniques in our new tool \tool~and compared it with the state-of-the-art tools \dfaind~\citep{DBLP:conf/lata/UlyantsevZS15,DBLP:conf/lata/ZakirzyanovMIUM19} and \dfaid~\citep{DBLP:conf/fmcad/LaufferYVSS22}, on the benchmarks generated as described in~\citep{DBLP:conf/lata/UlyantsevZS15,DBLP:conf/lata/ZakirzyanovMIUM19}.
Our experimental results demonstrate that \tool~builds smaller APTAs and is therefore significantly faster at minimising the DFAs than both \dfaind~and \dfaid.

To test the limitation of our technique, we employed it to extract deterministic safety or reachability automata as witness automata for parity game solving.
With \tool, we have established the lower bounds on the size of deterministic safety automata for parity games with up to $7$ colours.
In this case, the main bottleneck is no longer solving the Min-DFA inference problem, but the generation of the labelled samples, whose number is exponential in the length and the number of colours.
To the best of our knowledge, this is the first time that Min-DFA inference tools have been applied to parity game solving.
If they eventually scale, this may lead to new insights into the actual size of the minimal safety automata for solving parity games.

\section{Preliminaries}

In the whole paper, we fix a finite \emph{alphabet} $\alphabet$ of letters.
A \emph{word} is a finite sequence of letters in $\alphabet$. We denote with $\emptyword$ the empty word and with $\finwords$ the set of all finite words.
When discarding the empty word, we restrict the set of words to $\poswords = \finwords\setminus\setnocond{\emptyword}$.
A subset of $\finwords$ is a \emph{finitary language}.
Given a word $u$, we denote by $\wordletter{u}{i}$ the $i$-th letter of $u$.
For two given words $u$ and $w$, we denote by $u \cdot w$ ($uw$, for short) the concatenation of $u$ and $w$.
We say that $u$ is a \emph{prefix} of $u'$, denoted as $u \preceq u'$, if $u' = u\cdot v$ for some word $v\in\finwords$.
We denote by $\pref(u)$ the set of the prefixes of $u$, i.e., $\setcond{v \in \finwords}{ v \preceq u}$.
We also extend function $\pref$ to sets of words, i.e., we have $\pref(S) = \bigcup_{u \in S} \pref(u)$. 
 
\paragraph{Automata.}
An automaton on finite words is called a 3-valued (deterministic) \emph{finite automaton} (3DFA).
A 3DFA $\A$ is formally defined as a tuple $(\states, \init, \trans, F, R)$, where $\states$ is a finite set of states, $\init \in \states$ is the initial state, $\trans: \states \times \alphabet \rightarrow \states$ is a transition function, and $F, R$ and $D = \states\setminus (F\cup R)$ form a partition of $\states$ where $F\subseteq \states$ is the set of \emph{accepting} states, $R \subseteq \states$ is the set of \emph{rejecting} states and $D$ is the set of \emph{don't-care} states. 3DFAs map all words in $\finwords$ to three values, i.e., accepting ($+$), rejecting ($-$) and don't-care ($?$). This is why we call it a 3DFA.

A \emph{run} of a 3DFA $\A$ on a not empty finite word $u$ of length $n$ is a sequence of states $\run = q_{0} q_{1} \cdots q_{n} \in \states^{+}$ such that, for every $0 \leq i < n$, $q_{i+1} \in \trans(q_{i}, \wordletter{u}{i})$.
As usual, a finite word $u \in \finwords$ is \emph{accepted} (respectively, \emph{rejected}) by a 3DFA $\A$ if there is a run $q_{0} \cdots q_{n}$ over $u$ such that $q_{n} \in F$ (respectively, $q_n \in R$).
Naturally, a 3DFA $\A$ can be seen as a classification function in $\finwords \rightarrow \setnocond{+, -, ?}$.
The usual deterministic finite automaton (DFA) is a 3DFA with $R = \states\setminus F$, i.e., a word is mapped to either $+$ or $-$.
Both the classes of words \emph{accepted} and \emph{rejected} by 3DFAs are known to be regular languages.

We remark that the 3DFAs are very standard model for representing positive and negative samples in the literature.
In~\citep{alquezar1995incremental}, 3DFAs are called deterministic unbiased finite state automata.

For a given regular language, the Myhill-Nerode theorem~\citep{Myhill57,Nerode58} helps to obtain the minimal DFA.
Similarly, it is suggested in~\citep{DBLP:conf/tacas/ChenFCTW09} that we can identify equivalent words that reach the same state in the minimal 3DFA of a given function $L: \finwords \rightarrow \setnocond{+, -, ?}$.
Let $x, y$ be two word in $\finwords$ and $L \in (\finwords \rightarrow \setnocond{+, -, ?})$ be a function. We define an equivalence relation $\canoEq_L \subseteq \finwords \times \finwords$ as below:
\[ x \canoEq_L y \text{ if, and only if, } \forall v \in \finwords, L(xv) = L(yv).\]
We denote by $\size{\canoEq_L}$ the index of $\canoEq_L$, i.e., the number of equivalence classes defined by $L$.
Let $S$ be a given finite set of labelled samples.
We can also see $S$ as a classification function and it also induces an equivalence relation $\canoEq_S$.
Further, we define $\posset = \setnocond{ u \in S: S(u) = +}$, $\negset = \setnocond{u \in S: S(u) = -}$ and $S^? = \finwords \setminus S = \setnocond{u \in S: S(u) = ?}$.
Finally, we conclude with a straightforward proposition that follows from the fact that $\size{\canoEq_S}$ is bounded by the number of prefixes of $S$, i.e., $\size{\pref(S)}$.
\begin{fact}
Let $S$ be a finite set of labelled samples. Then the index of $\canoEq_S$ is also finite.
\end{fact}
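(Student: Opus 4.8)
The plan is to establish the explicit bound $\size{\canoEq_S} \le \size{\pref(S)} + 1$ that the surrounding text alludes to, and then note that the right-hand side is finite whenever $S$ is.

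First I would dispense with the finiteness of $\pref(S)$: since $S$ is a finite set of finite words and each $u \in S$ contributes exactly $\size{u}+1$ prefixes, we have $\size{\pref(S)} \le \sum_{u \in S}(\size{u}+1) < \infty$. Note also $S \subseteq \pref(S)$, since every word is a prefix of itself (take the empty suffix $\emptyword$).

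The key step is to show that all words lying outside $\pref(S)$ form a single $\canoEq_S$-class. Suppose $x \notin \pref(S)$ and $v \in \finwords$. Then $xv \notin \pref(S)$ as well: if $xv$ were a prefix of some $w \in S$, then $x$ would be a prefix of $w$ too, putting $x \in \pref(S)$, a contradiction. Since $\pref(S) \supseteq S$, this gives $xv \notin S$, hence $S(xv) = {?}$. Therefore any two words $x, y \notin \pref(S)$ satisfy $S(xv) = {?} = S(yv)$ for every $v \in \finwords$, i.e.\ $x \canoEq_S y$; these words occupy at most one equivalence class.

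Finally I would assemble the bound: every $\canoEq_S$-class either meets $\pref(S)$ — and there are at most $\size{\pref(S)}$ such classes — or is disjoint from $\pref(S)$, and by the previous step there is at most one of those. Hence $\size{\canoEq_S} \le \size{\pref(S)} + 1 < \infty$. I do not expect any genuine obstacle; the only point requiring a moment's care is the degenerate case $S = \emptyset$ (i.e.\ $\posset = \negset = \emptyset$), where $\pref(S) = \emptyset$, every word maps to ${?}$, and $\canoEq_S$ has the single class $\finwords$, so the claimed bound still holds.
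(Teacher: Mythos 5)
Your proof is correct and takes essentially the same route as the paper, which justifies the fact in one line by asserting that $\size{\canoEq_S}$ is bounded by the number of prefixes of $S$. Your fleshed-out version is in fact slightly more careful than the paper's phrasing: the single extra class of words outside $\pref(S)$ (all of which map every extension to $?$) means the correct bound is $\size{\canoEq_S} \le \size{\pref(S)} + 1$ rather than $\size{\pref(S)}$, but finiteness holds either way.
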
 

\section{Overview of \tool}

\paragraph*{Problem definition.}
Let $S = (S^+, S^-)$ be the given set of labelled samples.
Our goal in this paper is to find a \emph{minimal} DFA (Min-DFA) $\D$ for $S$ such that for all $u \in \finwords$, if $S(u) = \$$, where $\$ \in \setnocond{+, -}$, then $\D(u) = \$$.
We call the target DFA a minimal \emph{separating} DFA\footnote{In \citep{DBLP:conf/tacas/ChenFCTW09}, the 3DFA that recognises $S$ is called separating DFA for $S$.} for $S$.

The passive learner for separating DFAs~\citep{DBLP:conf/icgi/HeuleV10,DBLP:conf/lata/ZakirzyanovMIUM19} usually works as follows:
(1) Construct a 3DFA $\M$ recognising $S$ and (2) Minimise the 3DFA $\M$ to a Min-DFA using a SAT solver.
Our work mainly differs from the APTA construction for the first step.
We will first describe the APTA construction in Section~\ref{ssec:prior-apta-construction} and then give our proposal for the tool architecture in Section~\ref{ssec:our-apta-construction}. 
In the remainder of the paper, we let $S = (S^+, S^-)$ be the given labelled sample set.

\subsection{The construction of 3DFAs}
\label{ssec:prior-apta-construction}
Prior works~\citep{DBLP:conf/lata/ZakirzyanovMIUM19,DBLP:conf/lata/UlyantsevZS15,DBLP:conf/icgi/HeuleV10} construct an automaton called \emph{augmented prefix tree acceptor} (APTA)~\citep{alquezar1995incremental,DBLP:conf/icgi/CosteN98,DBLP:conf/ecml/CosteF03}\footnote{APTAs are called prefix tree unbiased finite state automata in~\citep{alquezar1995incremental} and they are also similar to the prefix-tree Moore Machines in \citep{trakhtenbrot1973finite}.} $\P$ that recognises $S$.
The APTA $\P = (\states, \emptyword, \trans, F, R)$ is formally defined as a 3DFA where $\states = \pref(S)$ is the set of states, $\emptyword$ is the initial state, $F = \posset$ is the set of accepting states, $R = \negset$ is the set of rejecting states, and $\trans(u, a) = ua$ for all $u, ua \in \states$ and $a \in \alphabet$.
As mentioned in the introduction, the number of boolean variables and clauses used in the SAT problem is polynomial in the number of states of $\P$.
The main issue is that the size of $\P$ increases dramatically with the growth of the number of samples in $S$ and the length of the samples, This is not surprising, given that $\P$ maps every word in $\pref(S)$ to a unique state.
To show this growth, we considered samples from parity game solving. Table~\ref{tab:3DFA-apta} shows the size comparison between the APTA and its minimal 3DFA (Min-3DFA) representation.
With 5 and 6 letters (in this case colours), we can observe that the Min-3DFAs can be much smaller than their corresponding APTA counterparts.
In other words, there are a lot of equivalent and redundant states in APTAs that can be merged together.

\begin{table}
    \centering
\begin{tabular}{ c c | c  c }\toprule  
    $\size{\alphabet}$ & Length & Min-3DFA & APTA \\ \hline
    5 & 7 & 438 & 53,277\\
    5 & 8 & 541  & 209,721\\
    5 & 9 & 644 & 835,954\\
    5 & 10 & 747 & 3,369,694\\
\hline
    6 & 7 & 1279 & 199,397\\ 
    6 & 8 & 1807 & 930,870\\ 
    6 & 9 & 2170 & 4,369,362 \\
    6 & 10 & 2533 & 20,689,546\\ 
    \bottomrule
\end{tabular}
\caption{Size comparison between Min-3DFA and APTA on part of benchmarks for parity game solving.}\label{tab:3DFA-apta}
\end{table}

In fact, since APTAs are acyclic (i.e., there are no cycles), we can minimise them with a linear-time backward traversal~\citep{DBLP:journals/coling/DaciukMWW00}.
The crucial step is how to identify whether two states are equivalent in the backward traversal~\citep{DBLP:journals/coling/DaciukMWW00}.
Our solution is to use the equivalence relation $\canoEq_S$.
Based on the definition of $\canoEq_S$, we define that two states $p, q \in \states$ are equivalent, denoted $p \equiv r$ if, and only if,
\begin{enumerate}
    \item they have the same acceptance status, i.e., they are both accepting, rejecting or don't-care states;
    \item for each letter $a \in \alphabet$, either they both have no successors or their successors are equivalent.
\end{enumerate}
In the implementation, since we only store one representative state for each equivalence class, the second requirement can be changed as follows:
\begin{enumerate}
   \item[2'.] for each letter $a \in \alphabet$, either they both have no successors or the same successor.
\end{enumerate}

Therefore, it is easy to come up with an algorithm to minimise the given APTA tree $\P$ by doing the following:
\begin{enumerate}
    \item We first collapse all accepting (respectively, rejecting) states without outgoing transitions to one accepting (respectively, rejecting) state without outgoing transitions, and put the two states in a map called $Register$ that stores equivalence relation of states. 
    \item Then we perform \emph{backward} traversal of states and check if there is a state whose successors are \emph{all} in $Register$. For such states, we identify equivalent states by Rule 2', replace all equivalent states with their representative and put their representative in $Register$.
    \item We repeat step 2 until all states, including the initial state, are in $Register$.
\end{enumerate}
In this way, we are guaranteed to obtain the minimal 3DFA $\M$ that correctly recognises the given set $S$.

Further, we do not have to construct the APTA $\P$ in order to obtain the minimal 3DFA for the given samples.
We have shown that it is possible to construct the minimal 3DFA on the fly, given that the samples are in the usual lexicographic order.
If the samples are not in the lexicographic order, we will first order them and then proceed to the construction of its 3DFA.
The details of the construction are deferred to Appendix~\ref{apd:min-apta-constrution}.

\begin{theorem}
    The incremental construction produces the minimal 3DFA recognising exactly the sample set $S$.
\end{theorem}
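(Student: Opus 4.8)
The statement bundles two claims: that the 3DFA $\M$ produced classifies \emph{every} word exactly as $S$ does, and that no 3DFA recognising $S$ has fewer states. I would prove correctness by exhibiting $\M$ as an isomorphic copy of the canonical 3DFA attached to $\canoEq_S$, and minimality through a Myhill--Nerode factorisation argument for three-valued functions.

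First I would set up the canonical object. For $u\in\finwords$ write $S_u$ for the residual $v\mapsto S(uv)$, so that $u\canoEq_S w$ iff $S_u=S_w$ and the $\canoEq_S$-classes biject with the distinct residuals of $S$, a finite set by Fact~1. Let $\N$ be the 3DFA on these classes with initial state $[\emptyword]$, transitions $\trans([u],a)=[ua]$, and $[u]$ accepting, rejecting or don't-care according to the common value $S(u)$; one checks that $\trans$ is well defined, that $\N$ recognises $S$, and that $\size{\N}=\size{\canoEq_S}$. For the lower bound, if two words $u,w$ reach the same state of a 3DFA $\A$ recognising $S$, then $uv$ and $wv$ reach the same state and hence get the same label for every $v$, so $u\canoEq_S w$; thus the ``state reached from the initial state'' map factors through $\canoEq_S$ on $\pref(S)$ and $\A$ has at least as many states as there are $\canoEq_S$-classes meeting $\pref(S)$. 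I would fix the precise notion of 3DFA intended in the statement here: for a \emph{complete} 3DFA this minimum is $\size{\canoEq_S}$, while if words without a run are read as $?$ it is $\size{\canoEq_S}-1$, the all-$?$ class (which, as noted below, meets no prefix of $S$) being unnecessary; in either case the construction attains the minimum.

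Next I would link $\M$ to $\N$ in two steps. First, \emph{the minimised APTA is $\N$ up to the all-$?$ dead state.} In $\P$ one has $\trans_\P^*(\emptyword,u)=u$, so state $u\in\pref(S)$ realises the residual $v\mapsto S(uv)=S_u$; and as every $u\in\pref(S)$ is a prefix of some sample, $S_u$ is never the all-$?$ residual. Hence on the finite acyclic $\P$ the relation $\equiv$ of the text -- which one backward pass computes, and which equals ``equal realised residual'' because the spurious case of a successor-less state matched against a state with an all-$?$ successor is now excluded -- coincides with $\canoEq_S$ restricted to $\pref(S)$. So the automaton $\P/{\equiv}$ returned by the $Register$/\register\ procedure is $\N$ with the dead state removed; it is therefore minimal, and it recognises $S$, since collapsing states with equal residuals preserves the classification function (a short induction on word length using conditions~1 and~$2'$, together with the fact that $\equiv$-equivalent states carry outgoing edges for exactly the same letters). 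Second, \emph{the incremental procedure returns $\P/{\equiv}$.} I would prove this by induction over the lexicographically ordered sample list, with the invariant that after an initial segment $Register$ holds exactly the representatives of the $\equiv$-classes of the sub-APTA spanned by those samples, all of whose edges already point to representatives, while the remaining states lie on the simple path spelling the current sample and their $\equiv$-class is not yet fixed, since only a lexicographically later -- hence not-yet-read -- sample could trigger a merge; a final flush of that path via \register\ after the last sample puts all of $\P/{\equiv}$ into $Register$.

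The Myhill--Nerode bound and the two inductions on word length are routine. The delicate step -- and the one I expect to be the main obstacle -- is showing that the on-the-fly algorithm, which never materialises $\P$, reaches \emph{global} minimality using only local \register\ operations: here the lexicographic order is essential, since it guarantees that once the algorithm abandons a branch it never returns to it, so minimising that branch immediately is sound; combining this termination invariant with the lower bound then closes the argument.
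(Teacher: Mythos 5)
Your plan is correct and follows essentially the same route the paper sketches in Appendix~A: minimise the APTA via the residual-based equivalence $\equiv$, and argue by induction over the lexicographically ordered samples that the on-the-fly $\register$ calls compute exactly $\P/{\equiv}$, the key point being that a state abandoned by the common-prefix step can never acquire further successors. Your write-up is in fact more complete than the paper's, which omits the induction and never spells out the Myhill--Nerode lower bound for three-valued functions or the handling of the all-don't-care class in a partial automaton; both of those additions are sound.
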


\subsection{Our proposal}
\label{ssec:our-apta-construction}
Our tool \tool~follows the classic two-step workflow. 
The advantage of \tool~over prior works is that \tool~has an access to an \emph{incremental} construction that produces the minimal 3DFA $\M$ of $S$ with respect to $\canoEq_S$.
With the incremental construction, a natural workflow of \tool~is to first construct the minimal 3DFA $\M$ from $S = (S^+, S^-)$ and then find a minimal separating DFA $\D$ from $\M$ using a SAT solver.
This approach is depicted in Figure~\ref{fig:work-flow}.
All the components labelled in green or blue are novel contributions made in our tool.
We use the standard SAT-based minimal separating DFA extractions from APTAs (and hence 3DFAs) described in \citep{DBLP:conf/lata/UlyantsevZS15}.

We observe that the minimal separating DFA finding algorithm~\citep{DBLP:conf/lata/UlyantsevZS15} does not necessarily work only on 3DFAs, but also on multiple 3DFAs; see Appendix~\ref{apd:sat-encoding} for our formulation.
This motivates us to ask a question: can we construct multiple 3DFAs for $S$?
We give a affirmative answer to this question.
\begin{figure}[t]
    \centering
    \scalebox{0.85}{
    \begin{tikzpicture}
        
        \node[draw=black, anchor=north, diamond,  text width=1.4cm,fill=black!20,minimum height=1.cm] (Mem) at (-0.1,-0.278) {};
        \node[] at (-0.1, -1.2) {$\posset, \negset$};

       \node[draw=green!50, inner sep=0, anchor=north,rounded corners, text width=2.8cm,fill=green!10,minimum height=0.9cm] (tree_learner) at (3.76, -0.76) {};
       \node[] at (3.76, -1.2) {Min-3DFA $\M$};
        
       \node[draw=yellow!70, inner sep=0, anchor=north,rounded corners, text width=2.6cm,fill=yellow!10,minimum height=1.4cm, align = center] (table_learner) at (7.86, -0.5) {
        SAT Encoding \\ SAT Solving
       };
	\node[align = center, text width=2.6cm] at (7.86, -0.22) {
			Minimiser
		};

        \node[draw=black, anchor=north,  text width=1.5cm,fill=black!20,minimum height=0.8cm] (minDFA) at (11.52, -0.76) {};
        \node[align = center, text width=2.6cm] at (11.52,-1.2) {
			Min-DFA
		};

        \draw [line width=1pt, double distance=1pt,
             arrows = {-Latex[length=0pt 3 0]}] (0.9, -1.2) -- node[auto] {}(2.3, -1.2) ;

        \draw [line width=1pt, double distance=1pt,
             arrows = {-Latex[length=0pt 3 0]}] (5.2, -1.2) -- node[auto] {}(6.5, -1.2) ;

        \draw [line width=1pt, double distance=1pt,
             arrows = {-Latex[length=0pt 3 0]}] (9.23, -1.2) -- node[auto] {}(10.6, -1.2) ;
	
\end{tikzpicture}
    }
    \caption{Workflow of \tool~with 3DFAs}
    \label{fig:work-flow}
\end{figure}
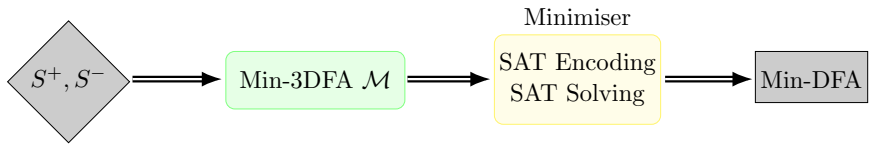

Our construction of double 3DFAs (dDFAs) from $S=(\posset, \negset)$ is formalised as follows:
(1) Construct the minimal 3DFAs $\D^+$ and $\D^-$ for $(\posset, \emptyset)$ and $(\negset, \emptyset)$, respectively;
(2) Make sure that $\D^+$ and $\D^-$ do not share the same state names;
(3) Combine the two 3DFAs into a dDFA $\N$ where the initial states of $\S$ are the initial states of both $\D^+$ and $\D^-$, the transitions between states remain unchanged and we make the accepting states of $\D^-$ as rejecting states.
The workflow of this construction is depicted in Figure~\ref{fig:work-flow-3NFAs}.
In this way, we obtain a dDFA $\N$ that recognises exactly the given set $S$.
The empirical evaluation shows that the two types of workflows are incomparable and both have their place in the learning procedure.

The SAT encoding of the minimiser component is fairly standard and has been deferred to Appendix~\ref{apd:sat-encoding}.
We can then gradually increase the number of states in the proposed separating DFA for $S$ until the constructed SAT formula is satisfiable.
It immediately follows that:
\begin{theorem}
    Our tool {\tool} will output a minimal separating DFA for $S$.
\end{theorem}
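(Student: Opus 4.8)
The plan is to reduce the claim to two facts that are essentially already in place: that step~(1) of the tool produces an intermediate automaton recognising $S$ \emph{exactly}, and that the SAT-based minimiser recalled in Appendix~\ref{apd:sat-encoding} correctly solves the bounded DFA-identification problem for that automaton. Write $\M$ for the automaton built in step~(1) --- the Min-3DFA of the preceding theorem, or the dDFA $\N$ of Section~\ref{ssec:our-apta-construction} in the two-automaton variant --- with accepting states $F$ and rejecting states $R$. The first ingredient I would record is the \emph{exactness} property: a word $u \in \finwords$ reaches a state of $F$ if, and only if, $u \in \posset$, and reaches a state of $R$ if, and only if, $u \in \negset$; this is precisely the content of the preceding theorem for $\M$, and for $\N$ it follows by inspecting the construction that relabels the accepting states of $\D^-$ as rejecting. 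The second ingredient is the characterisation underlying the encoding: for $k \in \naturals$, the SAT instance $\Phi_k$ is satisfiable exactly when there is a DFA with at most $k$ states accepting every word reaching a state of $F$ and rejecting every word reaching a state of $R$, and every model of $\Phi_k$ decodes to such a DFA; call such a DFA \emph{compatible with $\M$}. For the dDFA case I would note that the multi-automaton formulation of Appendix~\ref{apd:sat-encoding} enjoys the same characterisation, with $F$ and $R$ read jointly over the two components.

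With these in hand, soundness is immediate. The tool tests $\Phi_1, \Phi_2, \dots$ and returns the DFA $\D$ decoded from a model of the first satisfiable instance $\Phi_k$. By the characterisation, $\D$ is compatible with $\M$, so $\D(u) = +$ for every $u \in \posset$ and $\D(u) = -$ for every $u \in \negset$; equivalently, $\D(u) = S(u)$ whenever $S(u) \in \setnocond{+,-}$, which is exactly the definition of a separating DFA for $S$.

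For minimality I would argue as follows. Let $\D'$ be an arbitrary separating DFA for $S$, with $\size{\D'} = k'$ states. Exactness forces every word reaching a state of $F$ into $\posset$, where $\D'$ accepts, and every word reaching a state of $R$ into $\negset$, where $\D'$ rejects; hence $\D'$ is compatible with $\M$, and therefore $\Phi_{k'}$ is satisfiable. Since the tool stops at the \emph{least} $k$ with $\Phi_k$ satisfiable, that $k$ is at most $k'$, so the returned DFA has no more states than $\D'$. Termination then comes for free: a separating DFA for $S$ certainly exists --- for instance the DFA recognising the finite, hence regular, language $\posset$ --- so $\Phi_k$ is satisfiable for some finite $k$ and the incremental loop halts; and the output, which is separating by soundness, attains the minimum possible number of states, which is the claim.

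The one point that genuinely requires care --- and the only place the argument could break --- is the \emph{converse} half of exactness: it is not enough that $\M$ accepts $\posset$ and rejects $\negset$; one needs that \emph{no} word outside $\posset$ reaches an accepting state and no word outside $\negset$ reaches a rejecting state. This is precisely what makes an arbitrary separating DFA --- which is wholly unconstrained on $S^? = \finwords \setminus (\posset \cup \negset)$ --- compatible with $\M$, and hence what prevents the SAT search from overshooting the true minimum. I would therefore be explicit about invoking the full strength of the preceding theorem (and the corresponding statement for $\N$), rather than only the "recognises $S$" half, and about checking that the multi-3DFA encoding in the appendix inherits the same two-directional characterisation.
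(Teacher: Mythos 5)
Your argument is correct and is essentially the one the paper intends: it combines the exactness of the Min-3DFA/dDFA construction with the appendix theorem that $\phi^{\N}_n$ is satisfiable iff a separating complete DFA with $n$ states exists, plus the incremental search over $n$ — exactly the ingredients from which the paper declares the theorem to follow "immediately". Your explicit attention to the converse half of exactness (that $\M$ must not accept or reject anything beyond $\posset$ and $\negset$) is a genuine strengthening of the exposition; the paper only acknowledges this point implicitly, in its remark that the state-merging 3DFAs of earlier work fail precisely there.
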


We remark that in~\citep{alquezar1995incremental}, non-incremental and incremental constructions were also proposed to find small and even minimal 3DFAs $M$ such that for each $u \in S^{\$}$, it holds that $M(u) = \$$ where $\$ \in \setnocond{+, -}$.
These two constructions are based on state merging techniques given in \citep{Oncina-Garcia-92}, with the worst-time complexity being $\O(\size{\pref(S)}^3)$, while the size of the intermediate 3DFA constructed by our incremental construction will not exceed $\size{\pref(S)}$.
Furthermore, the resultant 3DFA $M$ may accept (respectively, reject) more words than $S^+$ (respectively, $S^-$), while our incremental construction produces a minimal 3DFA recognising $S$ exactly.
As a consequence, their constructed 3DFA can be smaller (or even larger) than ours, and can no longer be used to extract the minimal separating DFA for $S$. 

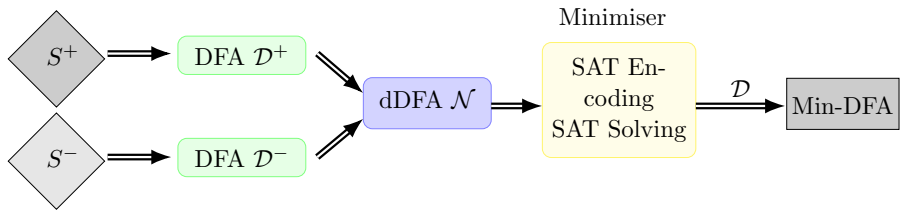
\begin{figure}[t]
    \centering
    \scalebox{0.85}{
    \begin{tikzpicture}
        
        \node[draw=black, anchor=north, diamond,  text width=1cm,fill=black!20, minimum height=0.5cm] (MemPos) at (-0.3,0.3) {};
        \node[] at (-0.2, -0.4) {$\posset$};

        \node[draw=black, anchor=north, diamond,  text width=1cm,fill=gray!20, minimum height=0.5cm] (MemNeg) at (-0.3,-1.3) {};
        \node[] at (-0.2, -2) {$\negset$};

      \node[draw=blue!50, inner sep=0, anchor=north,rounded corners, fill opacity=0.85,text width=2cm,fill=blue!20,minimum height=0.8cm] (learner) at (5.5, -0.75) {};
      \node[] at (5.5, -1.1) {dDFA $\N$};

       \node[draw=green!50, inner sep=0, anchor=north,rounded corners, text width=2cm,fill=green!10,minimum height=0.6cm] (tree_learner) at (2.6, -0.1) {};
       \node[] at (2.6, -0.4) {DFA $\D^+$};

       \node[draw=green!50, inner sep=0, anchor=north,rounded corners, text width=2cm,fill=green!10,minimum height=0.6cm] (tree_learner) at (2.6,-1.7) {};
       \node[] at (2.6, -2.0) {
       DFA $\D^-$	
       };

       \node[draw=yellow!70, inner sep=0, anchor=north,rounded corners, text width=2.4cm,fill=yellow!10,minimum height=1.8cm, align = center] (table_learner) at (8.5, -0.2) {
        SAT Encoding \\ SAT Solving	
       };
		\node[align = center, text width=2.6cm] at (8.4, 0.2) {
			Minimiser
		};

        \node[draw=black, anchor=north,  text width=1.5cm,fill=black!20,minimum height=0.8cm] (minDFA) at (12, -0.76) {};
        \node[align = center, text width=2.6cm] at (12,-1.2) {
			Min-DFA
		};

        \draw [line width=1pt, double distance=1pt,
             arrows = {-Latex[length=0pt 3 0]}] (0.5, -0.38) -- node[auto] {}(1.5, -0.38) ;

         \draw [line width=1pt, double distance=1pt,
             arrows = {-Latex[length=0pt 3 0]}] (0.5, -2.0) -- node[auto] {}(1.5, -2.0) ;

        \draw [line width=1pt, double distance=1pt,
             arrows = {-Latex[length=0pt 3 0]}] (3.78, -0.38) -- node[auto] {}(4.5, -1.0) ;

        \draw [line width=1pt, double distance=1pt,
             arrows = {-Latex[length=0pt 3 0]}] (3.78, -2) -- node[auto] {}(4.5, -1.4) ;

        \draw [line width=1pt, double distance=1pt,
             arrows = {-Latex[length=0pt 3 0]}] (6.5, -1.2) -- node[auto] {}(7.3, -1.2) ;

        \draw [line width=1pt, double distance=1pt,
             arrows = {-Latex[length=0pt 3 0]}] (9.7, -1.2) -- node[auto] {$\mathcal D$}(11.1, -1.2) ;
	
\end{tikzpicture}
    }
    \caption{Workflow of \tool~with dDFAs}
    \label{fig:work-flow-3NFAs}
\end{figure}

\section{Evaluation}
\label{sec:eval}
To further demonstrate the improvements of \tool\footnote{\tool~is publicly available at \url{https://github.com/liyong31/DFAMiner}}~over the state of the art, we conducted comprehensive experiments on standard benchmarks~\citep{DBLP:conf/lata/UlyantsevZS15,DBLP:conf/lata/ZakirzyanovMIUM19}.
We compared with \dfaind~\citep{DBLP:conf/lata/ZakirzyanovMIUM19} and \dfaid~\footnote{\url{https://github.com/mvcisback/dfa-identify}}~\citep{DBLP:conf/fmcad/LaufferYVSS22}, the state of the art tools publicly available for passive learning tasks.
Unlike \tool~and \dfaind, \dfaid~uses a SAT encoding of graph coloring problems~\citep{DBLP:conf/icgi/HeuleV10} and the representative DFAs in the second step~\citep{DBLP:conf/lata/UlyantsevZS15}. 
Like \dfaind, \tool~is also implemented in Python with \textsf{PySAT}~\citep{DBLP:conf/sat/IgnatievMM18}.
We delegate all SAT queries to the SAT solver \textsf{CaDical}~1.5.3 in all tools~\citep{BiereFazekasFleuryHeisinger-SAT-Competition-2020-solvers}.
\tool~accepts the samples formalised in Abbadingo format\footnote{\url{https://abbadingo.cs.nuim.ie/}}.

\begin{table}
\centering
\begin{tabular}{ c | c c | c c | c c | c c }\toprule
    \multicolumn{1}{c}{} & \multicolumn{2}{c}{DFA-Inductor} & \multicolumn{2}{c}{DFA-Identify} & \multicolumn{2}{c}{dDFA-MIN} & \multicolumn{2}{c}{3DFA-MIN} \\
    \hline
    N & avg & \% & avg & \% & avg & \% & avg & \% \\
    \hline
    4 & 0.12 & 100 & 0.09 & 100 & 0.03 & 100 & 0.02 & 100 \\
    5 & 0.29 & 100 & 1.38 & 100 & 0.06 & 100 & 0.05 & 100 \\
    6 & 0.67 & 100 & 2.33 & 100 & 0.30 & 100 & 0.18 & 100 \\
    7 & 1.81 & 100 & 4.12 & 100 & 0.80 & 100 & 0.73 & 100 \\
    8 & 3.57 & 100 & 9.70 & 100 & 1.29 & 100 & 1.25 & 100 \\
    9 & 10.84 & 100 & 20.76 & 100 & 3.83 & 100 & 3.78 & 100 \\
    10 & 50.91 & 100 & 44.57 & 100 & 17.88 & 100 & 16.80 & 100 \\
    11 & 154.73 & 100 & 128.69 & 100 & 55.12 & 100 &  59.46 & 100 \\
    12 & 399.52 & 96 & 373.65 & 99 & 144.27 & 100 & 162.39 & 100 \\
    13 & 850.04 & 74 & 785.93 & 82 & 390.10 & 99 & 418.62 & 97 \\
    14 & 1125.59 & 19 & 1099.92 & 23 & 809.88 & 76 &  861.10 & 69 \\
    15 & 1182.98 & 6 & 1197.61 & 1 & 1060.18 & 37 & 1062.02 & 34 \\
    16 & 1188.17 & 1 & 1184.82 & 3 & 1167.58 & 4 & 1164.02 & 5 \\
    \bottomrule
\end{tabular}
\caption{Comparison for the minimisation of DFAs from random samples of \tool~with DFA inductor. For each approach we report the mean minimisation time and the percentage of DFAs minimised within the time limit.}\label{tab:exp-short}
\end{table}

The experiments of Table~\ref{tab:exp-short} were carried on an Intel i7-4790 3.60 GHz processor. Each index N, reports the results of 100 benchmark instances of random samples. Each benchmark has $50\times N$ samples. For every index, we show the average time and the percentage of instances solved within 1200 seconds.
The alphabet for the samples has two symbols while the size of the generated DFA is $N$.
We compare four approaches to inferring Min-DFAs: \dfaind, \dfaid, \tool~with 3DFA (DFA-MIN), and \tool~with double DFA (dDFA-MIN). An extended version of the table is reported in Appendix~A.3.
Both dDFA-MIN and 3DFA-MIN perform better than \dfaind~and \dfaid, on average they are three times faster.
\dfaind~can minimise within 20 minutes instances up to level 13, while the two variants of \tool~can scale one more level and minimise one third of the instances of level 15. On these random samples the double DFA approach is slightly faster than the 3DFA one.

Figures~\ref{fig:test1} and~\ref{fig:test2} reports the comparison on the size of the APTA/dDFA (on the left) and minimisation time (on the right) for the previous benchmark. 
In these two figures, instead of the mean data, we show the individual data of each sample.
Both \dfaind~and \dfaid~build the same APTA (they differ for the encoding step), and as shown in Figure~\ref{fig:test1}, its size is three times larger than the dDFA built by \tool, no matter how big is the final DFA. Figure~\ref{fig:test2}, instead, shows that when using a double DFA, \tool~always performs better than DFA Inductor, on average three times faster with peaks of more than four times faster. We provide additional details and comparisons on minimisation time and 3DFA size in Appendix A.3.

\begin{figure}
    \centering
\begin{minipage}{.478\textwidth}
  \centering
  \includegraphics[width=1\linewidth]{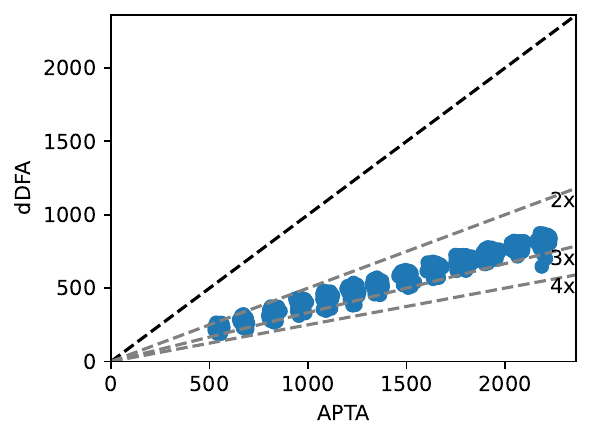}
  \caption{figure}{Scatter plot on automata size}
  \label{fig:test1}
\end{minipage}%
\begin{minipage}{.5\textwidth}
  \centering
  \includegraphics[width=1\linewidth]{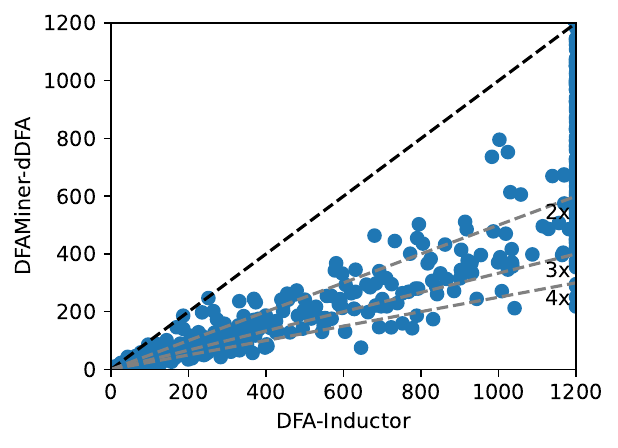}
  \caption{figure}{Scatter plot on runtime (secs)}
  \label{fig:test2}
\end{minipage}
\end{figure}

\section{Application in parity game solving}
\label{sec:application}

It has been shown that the first quasi-polynomial parity game solving algorithm~\citep{CJKLS17} essentially builds a separating automaton of quasi-polynomial size to distinguish runs with only winning cycles (cycles closed in a parity game where the highest colour occurring is \emph{even}) from runs that contain only odd cycles (where the highest colour occurring is \emph{odd})~\citep{CDFJLP19}.

Without going into detail, we note that the hardest case occurs when the colours are unique (occur only once, hence, the colour identifies a node, thing that helps to detect cycles), and have implemented this as follows: we fix an alphabet with \textbf{c} different colours, a length $\ell > \textbf{c}$, and a highest colour \textbf{c}.
We must accept a word if all cycles are winning (e.g.\ 001212), rejecting it if all cycles are losing (e.g.\ 13123312). Words with winning and losing cycles (e.g.\ 21232) are don't-care words.
A cycle occurs when a colour has repeated at least twice.

The resulting automata are always safety that reject all words that have not seen a winning cycle after (at most) $\ell$ steps, as well as some words that have seen both, winning and losing cycles (don't-care word), or, alternatively, reachability automata that accept all words that have not seen a losing cycle after at most $\ell$ steps (again, except don't-care ones).
Thus, the size of the Min-DFA falls when increasing the sample length $\ell$, and \emph{eventually stabilises}.
Using such a separating automaton reduces solving the parity game to solving a safety game \citep{BC18}.

Separating automata build with the current state-of-the-art construction~\citep{CJKLS17} grow quasi-polynomially, and since it is not known whether these constructions are optimal, we applied \tool~to learn the most succinct separating automata for the parity condition.

\begin{table}
\centering
\begin{tabular}{| c || c | c | c | c | c |}
    \hline
    Colours & 2 & 3 & 4 & 5 & 6 \\
    \hline
    DFA Size & 3 & 3 & 5 & 5 & 9 \\
    \hline
    Length & 3 & 5 & 7 & 11 & 15 \\
    \hline
    \#Pos & 3 & 130 & 1,645 & 9,375,269 & 4,399,883,736 \\
    \#Neg & 5 & 31 & 5,235 & 1,009,941 & 38,871,920,470 \\
    \hline
\end{tabular}
\caption{Samples required to learn the minimal separating automata for solving parity games.}\label{tab:parity}
\end{table}

Table~\ref{tab:parity} shows the application of \tool~to the parity condition up to $7$ colours (from 0 to 6).
For each maximal colour we report the length required to build the minimal separating automaton, the size of the obtained DFA, and the number of all the positive and negative samples generated. 
Although most words have both wining and losing cycles (don't-care words), the positive and negative samples grow \emph{exponentially}, too, which is why we stopped at 7 colours.

While the APTA size constructed by \dfaind~grows exponentially, the sizes of dDFAs and 3DFAs seem to grow only constantly when increasing the length of the samples for a fixed colour number. We report the details in Table~\ref{tab:parity-apta-3dfa} in Appendix~\ref{app:apta-3dfa-size}.
Consequently, \emph{all} versions of \dfaind~were only able to solve cases with at most 4 colours, while \tool~can manage to solve cases up to 6 colours and length 16.
To further push the limit of \tool~for parity game solving, we have also provided an efficient SAT encoding for parity games (in Appendix~\ref{app:parity}).
With the constructions for both 3DFAs and dDFAs and the efficient encoding, the bottleneck of the whole procedure is no longer solving the Min-DFA inference problem, but the generation of samples.
With a better sample generation approach, we believe that this application can give insights on the structure of minimal safety automata for an arbitrary number of colours.

\section{Discussion and Future Work}

We propose a novel and more efficient way to build APTAs for the Min-DFA inference problem.
Our contribution focuses on a compact representation of the positive and negative samples and, therefore, provides the leeway to benefit from further enhancements in solving the encoded SAT problem.

Natural future extensions of our approach include implementing the tight encoding of symmetry breaking~\citep{DBLP:conf/lata/ZakirzyanovMIUM19}.
Another easy extension of our construction is to learn a set of decomposed DFAs~\citep{DBLP:conf/fmcad/LaufferYVSS22}, thus improving the overall performance as well.
A more complex future work is to investigate whether or not one can similarly construct a deterministic B\"uchi automaton based on $\omega$-regular sets of accepting, rejecting, and don't-care words that provides a minimality guarantee for a given set of labelled samples.

\bibliography{paper}

\appendix

\section{Minimal APTA construction}\label{apd:min-apta-constrution}

The algorithm can be seen as the combination of the construction of the APTA and its minimisation based on the backward traversal of the APTA tree.
We will enumerate the input samples $S$ one by one in their lexicographical order;
this is important for identifying the states/nodes in the APTA tree that have been completely traversed and will remain unchanged after some point, in order to obtain an on-the-fly minimisation.
We will highlight the two fundamental components: the minimisation and the backward traversal/construction of the APTA tree in the sequel.

Let us consider a simpler situation where the \emph{full} APTA tree $\P$ is already given.
The most important thing in the minimisation component is to decide whether two states $p$ and $q$ are equivalent and it has already been described.

By using the equivalence relation we can obtain the minimal 3DFA $\M$ that correctly recognises the given set $S$.
Moreover, if we use a hash map for storing all representative states in $Register$, the minimisation algorithm above runs in linear time with respect to the number of states in $\P$.
However, as we can see in Table~\ref{tab:3DFA-apta}, the APTAs can be \emph{significantly larger} than the corresponding minimal 3DFAs.
Hence, it is vital to avoid the full construction of the APTA tree $\P$ of $S$.
The key to the on-the-fly construction is to know when a state has been \emph{completely} traversed during construction.

To this end, we can just assume that the samples are already ordered in the usual lexicographical order.
The comparison works as follows.
Assume that we already have a default order on the letters in $\alphabet$.
For two words $u$ and $u'$, we first compare their first $\mathsf{min}(\size{u}, \size{u'})$ letters:
(1) if we find a word that has a smaller letter than the other at the same position, then that word is smaller,
(2) if all letters are the same and $u$ has the same length as $u'$, then they are equal,
otherwise (3) the word that is longer than the other word is greater.

Assume that $S = u_1, u_2, \cdots, u_{\ell}$ is in order. We describe below how to tell when a state is \emph{impossible} to have more successors and it is ready to find its representative state. 
Assume that current APTA is $\P_i = (\states_i, \setnocond{\init}, \trans_i, F_i, R_i)$ and we now input the next sample $u_{i+1}$, where $i \geq 0$.

When $i = 0$, $\P_0$ is of course minimal since $\P_0$ only has a state $\init$ without any outgoing transitions.
For technical reason, we let $u_{0} = \emptyword$, which may not appear in the sample set $S$.
(Note that if there is an empty word $\emptyword$ in $S$, $\init$ will be set to accepting or rejecting accordingly.)

We first run $u_{i+1}$ on $\P_i$.
We let $u_{i+1} = x\cdot y_{i+1}$ and assume that $x \in \pref(u_{i+1})$ be the \emph{longest} word such that $\trans_i(\init, x) \neq \emptyset$.
Let $p = \trans_i(\init, x)$.
Then, all states along the run of $\P_i$ over $x$ are not ready to find their representatives, as $p$ needs to add more reachable states to run the suffix $y_{i+1}$.
Moreover, we observe that $x$ must be a prefix of $u_i$, i.e., $x \in \pref(u_i)$.
This is because every word that has a \emph{complete} run in $\P_i$ must \emph{not} be greater than $u_i$.
By definition of the lexicographic order, if $x$ is not a prefix of $u_i$, then $x$ must be smaller than $\wordletter{u_i}{0\cdots \size{x}}$.
This leads to the contradiction that $u_i$ is greater than $u_{i+1}$.
Let $ u_i = x\cdot y_{i}$ and $\rho = p_{0} \cdots p_{\size{u_i}}$ where $p_0 = \init$ and $p_{\size{x}} = p$.
We can show that all the states $p_k$ with $k > \size{x}$ in the run of $\P_i$ over $u_i = x \cdot y_i$ are ready to merge with their representative, as they must not have more reachable states.
Assume that there is a state $p_{\ell}$ with $\ell > \size{x}$ reached over a sample $u_h$ with $h > i$, then it is easier to lead the contradiction that $u_h$ is smaller than $u_{i+1}$.
Therefore, we can similarly identify the representatives for those states and merge them in the usual backward manner.
It follows that all the states except the ones in the run of $u_{i+1}$ in the 3DFA $\P_{i+1}$ are already consistent with respect to $\canoEq_S$; thus, there is no need to modify them afterwards.
After we have input all the samples, we only need to merge all states in the run over $u_{\ell}$ with their equivalent states.
In this way, we are guaranteed to obtain the minimal 3DFA $\S$ for $S$ in the end.

The formal procedure of the above incremental construction of the minimal 3DFA from $S$ is given in Algorithm~\ref{alg:the-algorithm-for-3DFA}.
Note that, when looking for the run from $p$ over the \emph{last} input sample, we only need to find the successors over the maximal letter by the $\lastchild$ function. 
In this way, when we reach the last state of the run, we then can begin to identify equivalent states in a backward manner, as described in the subprocedure $\register$.
Moreover, in the function $\mathsf{add\_suffix}(p, \var{y})$, we just create the run from $p$ over $y$ and set the last state to be accepting or rejecting depending on the label of $u$.
In fact, we only extend the equivalence relation $\equiv$ of~\citep{DBLP:journals/coling/DaciukMWW00} in $\register$ to support the accepting, rejecting and don't-care states, as described before.

\SetAlgoNoLine
\begin{algorithm}
\caption{Incremental construction of the minimal 3DFA from $S$}\label{alg:the-algorithm-for-3DFA}
    \begin{algorithmic}
    \Procedure{Main procedure}{Sample Set $U$}
        \State $\var{Register} := \emptyset$
        \State \While{$U$ has next sample $u$}{
        \State $\var{x} := \mathsf{common\_prefix}(u)$
        \State $\var{p} := \trans(\init, \var{x})$ \Comment{the last state over the common prefix $x$}
        \State $\var{y} := \wordletter{u}{\size{x} \ldots}$ \Comment{the remaining suffix of $u$}
        \State \If{$\haschild(p)$}{
        \State $\register(p)$ \Comment{merge/register all states after $p$}
         }
        \State $\mathsf{add\_suffix}(p, \var{y})$ \Comment{create run to accept suffix $y$ from $p$}
        }
        \State $\register(\init)$  \Comment{merge the run over the last sample}
        \EndProcedure
        
        \Procedure{$\register(p)$}{}
        \State $r := \lastchild(p)$  \Comment{obtain the successor over the maximal letter}
        \State \If{$\haschild(r)$}{
        \State $\register(r)$ \Comment{recursively obtain the run over last sample}
        }
        \If{$\exists q \in \states. (q \in Register \land q \equiv r)$}{
        \State\ $\mathsf{last\_child}(p) := q$ \Comment{merge with its representative}
        }
        \Else {
        \State\ $Register := Register \cup \{r\}$ \Comment{set the first state of each class as representative}
        }
    \EndProcedure
    \end{algorithmic}
\end{algorithm}

\begin{theorem}
Let $S$ be a finite labelled set of ordered samples.
Algorithm~\ref{alg:the-algorithm-for-3DFA} returns the correct and minimal 3DFA recognising $S$.
\end{theorem}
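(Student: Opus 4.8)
The plan is to establish, by induction on the number $i$ of samples already processed, a combined correctness-and-minimality invariant for the partial 3DFA $\P_i = (\states_i, \setnocond{\init}, \trans_i, F_i, R_i)$ built after the first $i$ iterations of the main loop. Reading the classification of a word $w$ off its unique partial run in $\P_i$ --- with a stuck run interpreted as don't-care --- the invariant has two halves: (\emph{correctness}) $\P_i$ computes the classification function of the sub-sample $\{u_1,\dots,u_i\}$, that is, $\P_i(w)=\$$ whenever $w\in\{u_1,\dots,u_i\}$ carries a label $\$\in\setnocond{+,-}$, and $\P_i(w)=?$ otherwise; and (\emph{local minimality}) the states of $\P_i$ that do \emph{not} lie on the run over $u_i$ are precisely the entries of $\var{Register}$, these entries are pairwise non-$\equiv$, and the subautomaton they span is already isomorphic to a subautomaton of the minimal 3DFA of $S$, so that on $\var{Register}$ the relation $\equiv$ coincides with $\canoEq_S$.

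For $i=0$ (taking the convention $u_0=\emptyword$) the automaton $\P_0$ is the single state $\init$, labelled according to whether $\emptyword\in S$, with $\var{Register}$ empty, so the invariant holds trivially. For the inductive step, write $u_{i+1}=x\cdot y_{i+1}$ with $x$ the longest prefix on which $\P_i$ has a run, ending in $p=\trans_i(\init,x)$. Two consequences of the lexicographic ordering, both argued above, drive the step: (a) $x\in\pref(u_i)$, since otherwise $x$ would be lexicographically smaller than $\wordletter{u_i}{0\cdots\size{x}}$, contradicting $u_i<u_{i+1}$; and (b) no state on the run of $\P_i$ over $u_i$ strictly past position $\size{x}$ is ever reached again by a later sample $u_h$ with $h>i$ (else, as $u_{i+1}$ lies between $u_i$ and $u_h$, it too would share that long prefix, contradicting the maximality of $x$). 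By (b) the tail $p_{\size{x}+1},\dots,p_{\size{u_i}}$ of the run over $u_i$ is \emph{finished}: its subtree is frozen for good, and every sample passing through it has already been read, so its residual classification w.r.t.\ $S$ is already the correct one. Hence the call $\register(p)$ may soundly fold that tail, bottom-up, into $\var{Register}$; and $\register$ implements Rule~$2'$ verbatim, replacing $r=\lastchild(p)$ by an existing representative $q$ exactly when $q\equiv r$, after first recursing into $r$. Since, by the induction hypothesis, the already-registered descendants are states of the minimal 3DFA of $S$, a downward induction over each finished subtree shows that $\equiv$ and $\canoEq_S$ agree on those states: two finished states are $\equiv$ iff they have the same acceptance status and, letter by letter, the same-or-both-absent successor, which --- the successors being minimal states --- is exactly equality of residual classification functions. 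Appending the fresh path for $y_{i+1}$ via $\mathsf{add\_suffix}$ then leaves a $\P_{i+1}$ whose off-$u_{i+1}$-run states are the former off-$u_i$-run states together with the just-registered tail --- i.e.\ exactly $\var{Register}$ --- and, since collapsing $\canoEq_S$-equivalent states never changes the classification of any word, correctness carries over as well. This closes the induction.

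Once all $\ell$ samples have been read, the final statement $\register(\init)$ folds the last finished chain --- the run over $u_\ell$, for which (b) holds vacuously since there is no $u_{\ell+1}$ --- into $\var{Register}$. Now \emph{every} state of the output automaton $\S$ is a $\var{Register}$ entry, the entries are pairwise non-$\equiv$, and each is isomorphic to a distinct state of the minimal 3DFA of $S$; conversely, $\S$ is, step by step, the image of the full APTA $\P$ under $\canoEq_S$, and since $\P$ has one state per element of $\pref(S)$ and computes the classification $S$ on them, every $\canoEq_S$-class meeting $\pref(S)$ appears in $\S$ exactly once. Hence $\S$ is a minimal deterministic acceptor of $S$; if a total transition function is desired, adjoining one don't-care sink for the words outside $\pref(S)$ --- or reusing an existing state with all-don't-care residual, when one is present --- yields the Myhill--Nerode-style minimal 3DFA, whose index $\size{\canoEq_S}$ is finite by the Fact above.

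I expect the main obstacle to be making the clause ``$\equiv$ coincides with $\canoEq_S$ on $\var{Register}$'' fully rigorous: it rests on a state never being registered before its entire subtree has been correctly minimised, which is precisely the point where the bottom-up recursion of $\register$ and the lexicographic order of the input must be shown to interlock. Getting the surrounding bookkeeping right --- formalising which states are ``finished'' at each stage and proving that the not-yet-registered states of $\P_{i+1}$ are exactly those on the run over $u_{i+1}$ --- is the other fiddly ingredient, though it follows routinely from (a) and (b). Everything else (soundness of quotienting a 3DFA by $\canoEq_S$, and the linear running time with a hash-based $\var{Register}$) is standard.
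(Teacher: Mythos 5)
Your proposal is correct and follows essentially the same route as the paper: the paper explicitly omits its proof, saying only that it is ``an induction on the number of input samples'' overlapping with the informal discussion in Appendix~\ref{apd:min-apta-constrution}, and your invariant together with the two lexicographic-order facts (the matched prefix $x$ of $u_{i+1}$ being a prefix of $u_i$, and the tail of the run over $u_i$ never being revisited) is exactly that argument, carried out in more detail than the paper itself provides.
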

\begin{proof}
The proof is basically an induction on the number of input samples and has been overlapped with the intuition described above.
We thus omit it here.
\end{proof}

\section{SAT Encoding}\label{apd:sat-encoding}
In this section, we show that how to obtain the minimal separating DFAs from 3DFAs/3NFAs.
It is known that minimising DFAs with don't-care words is NP-complete~\citep{DBLP:journals/tc/Pfleeger73}.
We will take the advantage of the current powerful solvers for Boolean Satisfiability (SAT) problems to look for minimal DFAs.

\subsection{SAT-based encoding of minimisation}
We assume that we are given a dDFA $\N = (\T, A, R)$, where $\T = (\states, \inits, \trans)$ is obtained from two DFAs $\D^+$ and $\D^-$.
We look for a separating DFA $\D$ of $n$ states for $\N$ such that for each $u \in \finwords$, if $\N(u) = \$ $, then $\D(u) = \$$, where $\$ \in \setnocond{+, -}$.
Clearly the size of $\D$ is bounded by the size of the TS, i.e. $0 < n \leq \size{\states}$, since we can obtain a DFA from the dDFA by simply using $\D^+$ (or the complement of $\D^-$). 
Nevertheless, we aim at finding the minimal such integer $n$.

To do this, we encode our problem as a SAT problem such that there is a separating complete DFA $\D$ with $n$ states if, and only if, the SAT problem is satisfiable.
We apply the standard propositional encoding~\citep{DBLP:conf/atva/Neider12,DBLP:conf/nfm/NeiderJ13,DBLP:conf/lata/UlyantsevZS15,DBLP:conf/lata/ZakirzyanovMIUM19}.
For simplicity, we let $\setnocond{0, \cdots, n-1}$ be the set of states of $\D$, such that $0$ is the initial one.
To encode the target DFA $\D$, we use the following variables:
\begin{itemize}
    \item the transition variable $e_{i, a, j}$ denotes that $i \pathto{a}{\trans} j$ holds, i.e.\ $e_{i, a, j}$ is true if, and only if, there is a transition from state $i$ to state $j$ over $a \in \alphabet$, and
    \item the acceptance variable $f_{i}$ denotes that $i\in F$, i.e.\ $f_i$ is true if, and only if, the state $i$ is an accepting one.
\end{itemize}

Once the problem is satisfiable, from the values of the above variables, it is easy to construct the DFA $\D$.
To that end, we need to tell the SAT solver how the DFA should look like by giving the constraints encoded as clauses.
For instance, to make sure the result DFA is indeed deterministic and complete, we need following constraints:
\begin{enumerate}
    \item[D1] Determinism:
    For every state $i $ and a letter $a \in \alphabet$ in $\D$, we have that $\neg e_{i, a, j} \vee \neg e_{i, a, k}$ for all $ 0 \leq j < k < n$.

    \item[D2] Completeness:
    For every state $i$ and a letter $a \in \alphabet$ in $\D$, $\bigvee_{0\leq j < n} e_{i, a, j}$ holds.
\end{enumerate}

Moreover, to make sure the obtained DFA $\D$ is separating for $\N$, we also need to perform the product of the target DFA $\D$ and $\N$.
In order to encode the product, we use extra variables $d_{p, i}$, which indicates that the state $p$ of $\N$ and the state $i$ of $\D$ can both be reached on some word $u$.
The constraints we need to enforce that $\D$ is separating for $\N$ are formalised as below:
\begin{enumerate}
    \item[D3] Initial condition:
    $d_{\init, 0}$ is true for all $\init \in \inits$. ($0$ is the initial state of~$\D$.
    \item[D4] Acceptance condition: for each state $i$ of $\D$,
    \begin{itemize}
        \item[D4.1] Accepting states:  $d_{p, i} \Rightarrow f_i$ holds for all $p \in A$
        \item[D4.2] Rejecting states: $d_{p, i} \Rightarrow \neg f_i$ holds for all $p \in R$;
    \end{itemize}
    \item[D5] Transition relation: for a pair of states $i, j$ in $\D$,
    $d_{p, i} \land e_{i, a, j} \Rightarrow d_{p', j}$ where $p' = \trans(p, a)$ for all $p \in \states$ and $ a \in \alphabet$.
\end{enumerate}
Let $\phi^{\N}_n$ be the conjunction of all these constraints.
Then, we obtain the following theorem.
\begin{theorem}
    Let $\N$ be a dDFA of $S$ and $n \in \naturals$.
    Then $\phi^{\N}_n$ is satisfiable if, and only if, there exists a complete DFA $\D$ with $n$ states that is separating for $\N$. 
\end{theorem}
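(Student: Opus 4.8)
The plan is to establish the two directions of the biconditional separately, in each case translating between a model of $\phi^{\N}_n$ and a complete DFA with state set $\{0,\dots,n-1\}$ and initial state $0$. Restricting attention to DFAs of this shape is harmless: any $n$-state complete DFA is isomorphic, by renaming states, to one of this form, and separation is invariant under such renaming. Throughout I would write $\trans^{*}$ for the (partial) extension of a transition function to words, and recall that $\N(u)=+$ (resp.\ $-$) means there is $\init\in\inits$ with $\trans^{*}(\init,u)\in A$ (resp.\ $\in R$).

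For the ``if'' direction, suppose $\D=(\{0,\dots,n-1\},0,\trans_\D,F_\D)$ is a complete DFA separating for $\N$. I would set $e_{i,a,j}$ true iff $\trans_\D(i,a)=j$, set $f_i$ true iff $i\in F_\D$, and set $d_{p,i}$ true iff there exist $u\in\finwords$ and $\init\in\inits$ with $\trans^{*}(\init,u)=p$ in $\N$ and $\trans_\D^{*}(0,u)=i$ in $\D$. Verifying the clauses is then routine: D1 and D2 hold because $\trans_\D$ is a total function; D3 holds by taking $u=\emptyword$; D5 holds because if $u$ witnesses $d_{p,i}$ then $ua$ witnesses $d_{\trans(p,a),\,\trans_\D(i,a)}$ (clauses for undefined $\trans(p,a)$ are simply absent); and D4.1, D4.2 hold because a witness $u$ for $d_{p,i}$ with $p\in A$ (resp.\ $p\in R$) satisfies $\N(u)=+$ (resp.\ $-$), so separation gives $\D(u)=+$ (resp.\ $-$), i.e.\ $i\in F_\D$ (resp.\ $i\notin F_\D$).

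For the ``only if'' direction, from a model of $\phi^{\N}_n$ I would define $\D$ with states $\{0,\dots,n-1\}$, initial state $0$, accepting set $F_\D=\{\,i : f_i\text{ true}\,\}$, and $\trans_\D(i,a)$ the state $j$ with $e_{i,a,j}$ true; D1 makes this $j$ unique and D2 makes it exist, so $\D$ is a well-defined complete DFA. The core is the reachability invariant: for every $u\in\finwords$, every $\init\in\inits$ and every state $p$ of $\N$, if $\trans^{*}(\init,u)=p$ then $d_{p,\,\trans_\D^{*}(0,u)}$ is true. This is proved by induction on $\size{u}$: the base case $u=\emptyword$ is exactly D3, and for $u=va$ one applies the hypothesis to $v$, notes that $e_{i',a,i}$ is true for $i'=\trans_\D^{*}(0,v)$ and $i=\trans_\D^{*}(0,va)$, and invokes D5. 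Given the invariant, separation follows at once: if $\N(u)=+$ then $u$ has a run in $\N$ ending in some $p\in A$, so $d_{p,i}$ holds for $i=\trans_\D^{*}(0,u)$, and D4.1 forces $f_i$, i.e.\ $\D(u)=+$; symmetrically $\N(u)=-$ forces $\D(u)=-$ via D4.2.

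I do not expect a real obstacle: this is the standard soundness/completeness argument for this style of encoding. The only points needing care are (i) making explicit that the normalisation ``states $=\{0,\dots,n-1\}$, initial state $0$'' costs nothing, and (ii) treating $\N$ correctly as a structure with several initial states and a possibly partial transition function — so the invariant must be stated existentially over $\inits$ for D3 to cover the base case, and words whose run in $\N$ is incomplete are don't-care and simply ignored.
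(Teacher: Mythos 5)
Your proposal is correct. Note that the paper itself states this theorem without proof (it is presented as an immediate consequence of the standard propositional encoding of Neider et al.\ and Ulyantsev et al.), so there is no paper proof to diverge from; your two-directional argument --- building a satisfying assignment from a separating DFA by interpreting $d_{p,i}$ as joint reachability, and conversely reading off a complete DFA from a model and proving the reachability invariant by induction on word length so that D4.1/D4.2 force separation --- is exactly the standard soundness/completeness argument the paper implicitly appeals to. Your two points of care (normalising the state set to $\{0,\dots,n-1\}$ with initial state $0$, and handling the multiple initial states and partial transition function of the dDFA $\N$, with D5 clauses absent where $\trans(p,a)$ is undefined) are the right ones, and the argument goes through without obstruction.
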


The formula $\phi^{\N}_n$ contains $\O(n^3\cdot \size{\alphabet} + n^2 \cdot \size{\states} \cdot \size{\alphabet}) $ constraints.

When looking for separating DFAs, the SAT solver may need to inspect multiple isomorphic DFAs that only differ in their state names for satisfiability.
If those isomorphic DFAs are not separating for $\N$, then the SAT solver still has to prove this for each DFA.
To reduce the search space, it suffices to check only a representative DFA for all isomorphic DFAs~\citep{DBLP:conf/lata/UlyantsevZS15}.
We will describe the representative DFA in the following section.

\subsection{SAT encoding of the representative DFA}

The representative DFA $\D$ is induced by restricting the structure of its breath-first search (BFS) tree $\tau$.
In our setting, an edge of the BFS tree is a directed connection from one node to another and it is labelled by a letter in $\alphabet$.
In this section, we need to enforce an order on the letters in $\alphabet$.
For simplicity, we let $\alphabet = \setnocond{0, \cdots, \textbf{c}-1}$ where $\textbf{c} > 0$.
Recall that the set of nodes in the tree is the set of states of $\D$.
Below we list the requirements of the BFS tree.
\begin{itemize}
    \item[A1] Minimal parent:
    \begin{itemize}
         \item[A1.1] If there is an edge from node $i$ to node $j$ in the BFS tree, then $i < j$ and $i$ is the minimal state that reaches $j$ via a transition in $\D$, and
         \item[A1.2] If $j$ is a child node of $i$ and $j+1$ is a child node of $k$, then, $i < k$. This is because we enforce that smaller children must have smaller parents.
    \end{itemize}
    \item[A2] Minimal letter edge: 
    \begin{itemize}
        \item[A2.1] If the edge from $i$ to $j$ is labelled with letter $a$ in the BFS tree, $a$ must be the minimal letter from $i$ to $j$ in $\D$, and
        \item[A2.2] If there are edges from $i$ to $j$ over $a_1$ and to $k$ over $a_2$ in the BFS tree, and $a_1 < a_2$, then $j < k$. 
    Note that it is impossible for $a_1$ and $a_2$ to be equal since $\D$ is deterministic.
    \end{itemize}
    
\end{itemize}

In order to encode the above requirements, we need the following three types of boolean variables.
For a pair of states $0 \leq i, j < n$ and a letter $0 \leq a < \textbf{c}$: 
\begin{itemize}
    \item The edge variable $m_{i, a, j}$ of the BFS tree denotes that $m_{i, a, j}$ is true if, and only if, the BFS tree has an edge from node $i$ to node $j$ labelled with the letter $a$.
    \item The parent variable $p_{j, i}$ indicates that $p_{j, i}$ is true if, and only if, node $j$ is a child node of node $i$ in the BFS tree.
    \item The transition variable $t_{i,j}$ indicates that $t_{i,j}$ is true if, and only if, there is a transition from state $i$ to state $j$ in $\D$.
\end{itemize}

Now we can give the following constraints that are needed to represent the requirements of the BFS tree below.
\begin{enumerate}
    \item[B1] Minimal parent:
     \begin{itemize}
         \item[B1.1] for two nodes $0 \leq i < j < n$ in the BFS tree, we have $p_{j,i} \Leftrightarrow t_{i,j} \land \bigwedge_{0 \leq k < i} \neg t_{k, j}$, and
         \item[B1.2] for every triple $0 \leq k < i < j < n$, we have $p_{j, i} \Rightarrow \neg p_{j+1, k}$.
     \end{itemize}

    \item[B2]
    Minimal letter edge: 
    \begin{itemize}
        \item[B2.1] for every pair $0 \leq i < j < n$ in the BFS tree and a letter $a \in \alphabet$, we have     $m_{i, a, j} \Leftrightarrow (e_{i, a, j} \land \bigwedge_{0 \leq b < a} \neg e_{i, b, j} $).
        \item[B2.2] for every pair $0 \leq i < j < n$ of nodes and a pair $0 \leq a < b < \textbf{c}$ of letters, we have $p_{j, i} \land p_{j+1, i} \land m_{i, b, j} \Rightarrow \neg m_{i, a, j+1}$.
    \end{itemize}
    \item[B3] Edge consistency: for each pair $0 \leq i < j < n$, $t_{i, j} \Leftrightarrow \bigvee_{0 \leq a < \textbf{c}} e_{i, a, j}$ holds.
    \item[B4] Existence of a parent: for each node $0 \leq i < n$ in the BFS tree, we have that $\bigvee_{0 \leq j < i} p_{j, i}$ holds.
\end{enumerate}
Let $\phi^{\tau}_n$ be the conjunction of all these constraints.
Then we obtain the following theorem.
\begin{theorem}
    Let $\N$ a dDFA of $S$ and $n \in \naturals$.
    Then, $\phi^{\N}_n \land \phi^{\tau}_n$ is satisfiable if, and only if, there exists a separating DFA $\D$ with $n$ states, with respect to $\N$. 
\end{theorem}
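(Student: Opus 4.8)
The plan is to reduce the statement to the preceding theorem, which already characterises satisfiability of $\phi^{\N}_n$ by the existence of a complete $n$-state DFA that is separating for $\N$. Since $\phi^{\N}_n \land \phi^{\tau}_n$ entails $\phi^{\N}_n$, the ``only if'' direction is then immediate: a satisfying assignment of the conjunction in particular satisfies $\phi^{\N}_n$, and the preceding theorem yields a (complete) separating DFA with $n$ states. All the work therefore lies in the ``if'' direction, whose content is the standard fact that the clauses B1--B4 single out exactly one \emph{canonical} DFA in each isomorphism class of initially-connected $n$-state DFAs --- the one whose state numbering agrees with a breadth-first traversal --- so that the existence of \emph{some} separating $n$-state DFA implies the existence of a separating one in canonical form, which is precisely what a model of $\phi^{\N}_n \land \phi^{\tau}_n$ encodes.

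For the ``if'' direction I would start from a separating DFA $\D$ with $n$ states, which we may assume complete and with all $n$ states reachable from the initial state --- these properties are forced on any model of $\phi^{\tau}_n$ by D2, B1.1 and B4, and in the intended use $n$ is minimal, which forces them as well. I would then run a breadth-first search of $\D$ from its initial state, scanning the outgoing letters of each state in the fixed order $0, \dots, \textbf{c}-1$, and relabel the states $0, \dots, n-1$ in order of first discovery; call the resulting isomorphic automaton $\D'$. From $\D'$ I define the truth assignment: the variables $e_{i,a,j}$ and $f_i$ are read directly off $\D'$; the variables $t_{i,j}$, $m_{i,a,j}$ and $p_{j,i}$ are set according to their defining biconditionals B3, B2.1 and B1.1; and $d_{p,i}$ is declared true for exactly those pairs $(p,i)$ such that some word has a run in $\N$ (from some $\init \in \inits$) ending in $p$ and a run in $\D'$ from $0$ ending in $i$.

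It then remains to check every clause. The clauses D1 and D2 hold because $\D'$ is a complete DFA; B3, B2.1 and B1.1 hold by construction; D3 and D5 hold because the chosen set of $d$-pairs contains the initial pairs and is closed under product transitions; and D4 holds precisely because $\D'$, being isomorphic to the separating DFA $\D$, is itself separating for $\N$ --- if $(p,i)$ is declared true with $p \in A$, some word $u$ drives $\N$ to $p$ and $\D'$ to $i$, so $\N(u) = +$ forces $\D'(u) = +$, i.e.\ $f_i$ holds, and symmetrically for $R$. The genuinely BFS-specific clauses are B4 (every non-root node has a parent), B1.2 (a larger node does not receive a smaller parent than a smaller one) and B2.2 (consecutive children of a common parent are ordered consistently with the connecting letters); I would verify these by induction on the breadth-first discovery order, using the two structural facts that in $\D'$ a node is first reached from an already-processed node of strictly smaller index --- so the ``minimal in-neighbour'' parent of B1.1 coincides with the breadth-first spanning-tree parent --- and that breadth-first search enqueues children parent-by-parent and, within a parent, in increasing order of the connecting letter.

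The main obstacle is exactly this last verification: one must argue carefully that the purely local, order-based parent relation pinned down by B1.1 really is the global breadth-first spanning tree, and that relabelling by discovery order respects the child- and sibling-orderings demanded by B1.2 and B2.2. The rest is routine unwinding of definitions, together with the observation that ``separating for $\N$'' is invariant under isomorphism of the DFA --- which is what lets the $\phi^{\N}_n$ part of the argument piggyback on the preceding theorem instead of being redone.
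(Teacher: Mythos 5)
Your overall strategy is the right one, and it is in fact the only argument available: the paper states this theorem without proof, deferring to the standard symmetry-breaking construction of \citep{DBLP:conf/lata/UlyantsevZS15}, and your plan --- ``only if'' by projection onto $\phi^{\N}_n$ and the preceding theorem, ``if'' by relabelling an arbitrary separating DFA according to breadth-first discovery order (letters scanned in increasing order) and reading the variable assignment off the relabelled automaton --- is precisely that construction. The BFS facts you defer (that the minimal-index in-neighbour of B1.1 coincides with the BFS spanning-tree parent because nodes are dequeued in index order, and that B1.2 and B2.2 record the parent-by-parent, letter-by-letter enqueueing discipline) do go through by the induction you describe, so I would not count that as a gap.

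The one loose end worth tightening is the reduction to a \emph{complete, initially connected} DFA at the start of the ``if'' direction. Constraints D2 and B4 force any model to describe a complete DFA in which every state is reachable, so the hypothesis ``there exists a separating DFA with $n$ states'' must first be upgraded to ``there exists a complete separating DFA with $n$ states, all reachable''. You justify this by appeal to minimality of $n$, but the theorem quantifies over arbitrary $n \in \naturals$, so you also need the padding step: given a complete separating DFA whose reachable part has $m < n$ states, duplicate a state (copy its outgoing transitions and acceptance status, and redirect one incoming transition to the copy) $n - m$ times to obtain a language-equivalent complete DFA with exactly $n$ reachable states. This is routine but is genuinely part of the claim as stated; without it the ``if'' direction only holds for minimal $n$. (Completeness itself is best treated as part of the definition of DFA here, as the paper implicitly does via D2; completing a partial separating DFA by adding a sink would otherwise cost an extra state.)
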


We remark that the formula $\phi^{\tau}_n$ contains $\O(n^3 + n^2\cdot \size{\alphabet}^2) $ constraints.

\section{More experimental results}\label{apd:more-results}

In this section we provide additional experiments.
Table~\ref{tab:full-results} extends Table~\ref{tab:exp-short} by reporting also the minimum and maximum running time for each technique. Since the samples are randomly generated, the minimum and maximum values can show irregular peaks, while the average tends to converge to the timeout value.

In Table~\ref{tab:full-automata-results} we report, on the same benchmark as the previous table, the minimum, maximal, and average size of the APTA for \dfaind~and \dfaid~(they build the exact same APTA) and the size of the 3DFA for the two variants of \tool, namely the double DFA and 3DFA.

\begin{table}[]
    \centering
     \scalebox{0.7}{
    \begin{tabular}{ c | c c c c | c c c c | c c c c }\toprule
    \multicolumn{1}{c}{} & \multicolumn{4}{c}{DFA-Inductor} & \multicolumn{4}{c}{dDFA-MIN} & \multicolumn{4}{c}{DFA-MIN} \\
    \hline
    N & min & avg & max & solved\% & min & avg & max & solved\% & min & avg & max & solved\% \\
    \hline
    4 & 0.01 & 0.12 & 0.20 & 100 & 0.01 & 0.03 & 0.49 & 100 & 0.01 & 0.02 & 0.09 & 100 \\
    5 & 0.01 & 0.29 & 0.62 & 100 & 0.01 & 0.06 & 1.01 & 100 & 0.01 & 0.05 & 0.22 & 100 \\
    6 & 0.02 & 0.67 & 1.43 & 100 & 0.01 & 0.30 & 1.82 & 100 & 0.01 & 0.18 & 1.91 & 100 \\
    7 & 0.02 & 1.81 & 5.86 & 100 & 0.03 & 0.80 & 7.09 & 100 & 0.04 & 0.73 & 3.61 & 100 \\
    8 & 0.02 & 3.57 & 10.26 & 100 & 0.01 & 1.29 & 9.06 & 100 & 0.01 & 1.25 & 4.53 & 100 \\
    9 & 0.02 & 10.84 & 39.84 & 100 & 0.02 & 3.83 & 11.94 & 100 & 0.01 & 3.78 & 15.71 & 100 \\
    10 & 3.85 & 50.91 & 199.04 & 100 & 0.69 & 17.88 & 62.40 & 100 & 0.63 & 16.80 & 57.27 & 100 \\
    11 & 1.90 & 154.73 & 691.79 & 100 & 0.37 & 55.12 & 263.82 & 100 & 0.33 & 59.46 & 288.10 & 100 \\
    12 & 15.86 & 399.52 & 1200 & 96 & 4.30 & 144.27 & 536.74 & 100 & 5.12 & 162.39 & 705.19 & 100 \\
    13 & 84.84 & 850.04 & 1200 & 74 & 28.61 & 390.10 & 1200 & 99 & 15.86 & 418.62 & 1200 & 97 \\
    14 & 214.76 & 1125.59 & 1200 & 19 & 77.56 & 809.88 & 1200 & 76 & 68.03 & 861.10 & 1200 & 69 \\
    15 & 587.66 & 1182.98 & 1200 & 6 & 225.56 & 1060.18 & 1200 & 37 & 218.22 & 1062.02 & 1200 & 34 \\
    16 & 17.22 & 1188.17 & 1200 & 1 & 9.55 & 1167.58 & 1200 & 4 & 5.06 & 1164.02 & 1200 & 5 \\
    \bottomrule
\end{tabular}
}
\caption{Full experimental results on generated benchmarks}\label{tab:full-results}
\end{table}

\begin{table}[]
    \centering
    \begin{tabular}{ c | c c c | c c c | c c c }\toprule
    \multicolumn{1}{c}{} & \multicolumn{3}{c}{DFA-Inductor/DFA-Identify} & \multicolumn{3}{c}{dDFA-MIN} & \multicolumn{3}{c}{DFA-MIN} \\ \hline
    N & min & avg & max & min & avg & max & min & avg & max \\ \hline
    4 & 526 & 549 & 570 & 189 & 230 & 264 & 188 & 207 & 227\\
    5 & 650 & 675 & 695 & 221 & 281 & 321 & 220 & 248 & 270\\
    6 & 798 & 822 & 860 & 267 & 338 & 374 & 266 & 299 & 322\\
    7 & 935 & 963 & 994 & 312 & 393 & 437 & 311 & 347 & 373\\
    8 & 1073 & 1097 & 1126 & 347 & 439 & 481 & 346 & 389 & 418\\
    9 & 1195 & 1225 & 1260 & 383 & 485 & 535 & 382 & 425 & 449\\
    10 & 1322 & 1349 & 1378 & 454 & 528 & 571 & 427 & 461 & 485\\
    11 & 1459 & 1496 & 1542 & 503 & 580 & 622 & 474 & 510 & 530\\
    12 & 1602 & 1643 & 1683 & 563 & 638 & 679 & 520 & 559 & 586\\
    13 & 1748 & 1791 & 1836 & 618 & 690 & 725 & 556 & 604 & 632\\
    14 & 1887 & 1930 & 1968 & 664 & 741 & 777 & 605 & 648 & 670\\
    15 & 2018 & 2063 & 2098 & 716 & 786 & 821 & 651 & 690 & 719\\
    16 & 2159 & 2196 & 2230 & 647 & 827 & 873 & 632 & 726 & 755\\
    \bottomrule
\end{tabular}
\caption{Full automata size results on generated benchmarks before the SAT minimisation}\label{tab:full-automata-results}
\end{table}

In the following figures we propose a pairwise comparison of all the techniques on the minimisation time. Figure~\ref{fig:dfavsind} shows \tool~and \dfaind.
Interestingly, in Figure~\ref{fig:dfavsddfa}, the two variants of \tool~do not overlap, there are cases up to two times easier to minimise for the double DFA variant and cases up to three times easier for the 3DFA variant, which, as shown in Table~\ref{tab:full-results}, on average is slightly faster.
In Figure~\ref{fig:ind2vsddfa} we compare the double DFA variant of \tool~and \dfaind~2 which employs a different, and more efficient, encoding of DFA called tightDFS. This heuristic can also be applied to our tool. Although the application of tight encoding, \dfaind~2 is still slower that \tool~up to four times slower. However, in some cases it can be slightly faster or solve cases on which dDFA hit the timeout (seven cases with N=14). Similarly, Figure~\ref{fig:ind2vsddfa} compares the 3DFA variant of \tool~and \dfaind~2. In this case the number of DFAs not solved by \tool~are eight, all with N=14.
Finally, Figures~\ref{fig:idvddfa} and~\ref{fig:idvsddfa} show the comparison of \dfaid~and the two variants of \tool.

\begin{figure}
    \centering
    \scalebox{0.8}{
    \includegraphics{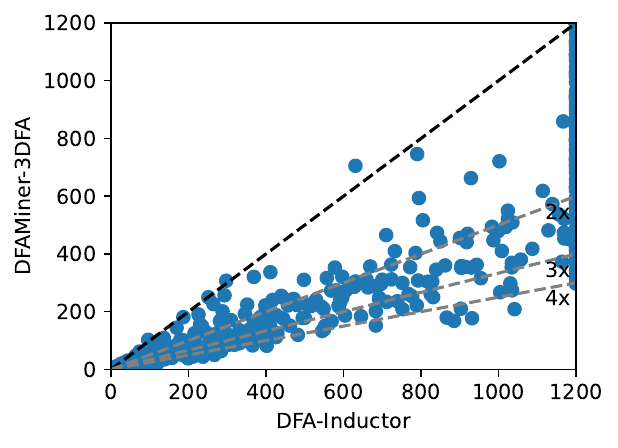}
    }
    \caption{Minimisation time of \tool~with double DFA and DFA inductor on 1,300 DFAs.}
    \label{fig:dfavsind}
\end{figure}

\begin{figure}
    \centering
    \scalebox{0.8}{
    \includegraphics{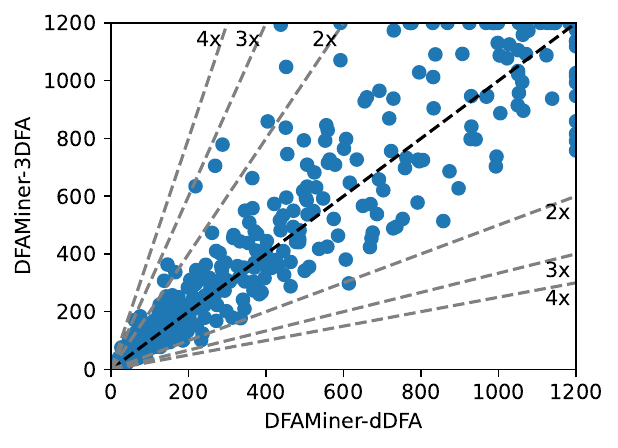}
    }
    \caption{Minimisation time of the two \tool~variants: the double DFA and 3DFA.}
    \label{fig:dfavsddfa}
\end{figure}

\begin{figure}
    \centering
    \scalebox{0.8}{
    \includegraphics{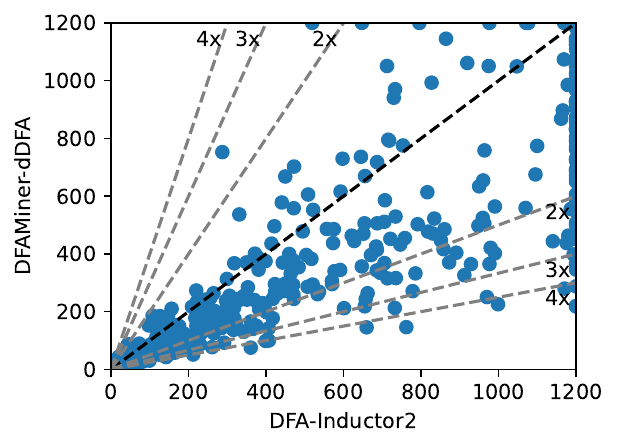}
    }
    \caption{Minimisation time of the \dfaind~2 and \tool~with dDFA.}
    \label{fig:ind2vsddfa}
\end{figure}

\begin{figure}
    \centering
    \scalebox{0.8}{
    \includegraphics{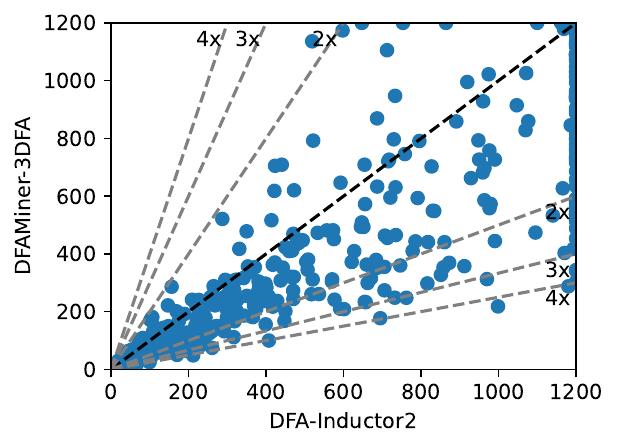}
    }
    \caption{Minimisation time of the \dfaind~2 and \tool~with 3DFA.}
    \label{fig:dfavsddfa}
\end{figure}

\begin{figure}
    \centering
    \scalebox{0.8}{
    \includegraphics{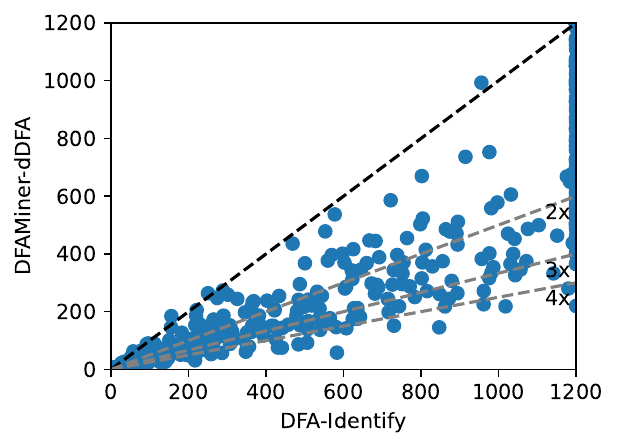}
    }
    \caption{Minimisation time of the \dfaid~and \tool~with dDFA.}
    \label{fig:idvddfa}
\end{figure}

\begin{figure}
    \centering
    \scalebox{0.8}{
    \includegraphics{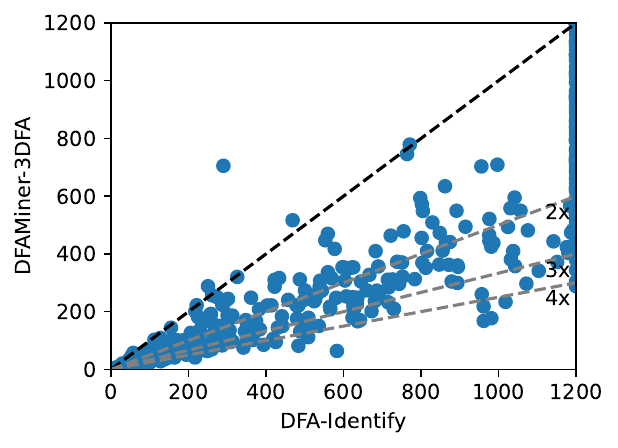}
    }
    \caption{Minimisation time of the \dfaid~and \tool~with 3DFA.}
    \label{fig:idvsddfa}
\end{figure}

\subsection{Size growth of APTA, dDFA and 3DFA in parity game solving}
\label{app:apta-3dfa-size}
\begin{table}[]
    \centering
    \begin{tabular}{ c c | c | c | c | c | c c }
    Colours & Length & dDFA & 3DFA & APTA & Samples & \#Pos & \#Neg\\
    \hline
    2 & 3 & 12 & 8 & 15 & $2^3$ & 3 & 5\\ \hline 

    3 & 4 & 28 & 23 & 111& $3^4$ & 51 & 20\\  
    3 & 5 & 38 & 33 & 266& $3^5$ &  130 & 31\\ \hline

    4 & 5 & 84 & 82 & 1,083& $4^5$ & 274 & 488\\  
    4 & 6 & 117 & 122 & 3,311& $4^6$ & 669 & 1,599\\
    4 & 7 & 150 & 155 & 10,076& $4^7$ & 1,645 & 5,235\\ \hline
 
    5 & 6 &  269 & 301 & 13,634& $5^6$ & 7,233 & 3,067\\  
    5 & 7 & 372 & 438 & 53,277&$5^7$ & 30,332 & 9,625\\
    5 & 8 & 475 & 541  & 209,721& $5^8$ & 127,194 & 30,456\\
    5 & 9 & 578 & 644 & 835,954& $5^9$ & 533,305 & 97,228\\
    5 & 10 & 681 & 747 & 3,369,694& $5^{10}$ & 2,236,023 & 312,568\\
    5 & 11 & 784 & 850 & 13,704,486& $5^{11}$ & 9,375,269 & 1,009,941\\ \hline
    6 & 7 & 986 & 1,279 & 199,397& $6^7$ & 53,556 & 104,095\\ 
    6 & 8 & 1,349& 1,807 & 930,870& $6^8$ & 216,298 & 517,590\\ 
    6 & 9 & 1,712& 2,170 & 4,369,362 & $6^9$ & 878,823 & 2,573,621\\
    6 & 10 & 2,075 & 2,533 & 20,689,546& $6^{10}$ & 3,595,591 & 12,795,642\\
    6 & 11 & 2,438 & 2,896 & - & $6^{11}$ & 14,799,059 & 63,616,339\\
    6 & 12 & 2,801& 3,259 & - & $6^{12}$ & 61,192,124 & 316,286,133\\
    6 & 13 & 3,164 & 3,622 & - & $6^{13}$ & 253,881,602 & 1,572,522,807\\
    6 & 14 & 3,527 & 3,985 & - & $6^{14}$ & 1,055,948,048 & 7,818,368,374\\
    6 & 15 & 3,890& 4,348 & - & $6^{15}$ & 4,399,883,736 & 38,871,920,470\\
    6 & 16 & 4,253& 4,711 & - & $6^{16}$ & 18,357,865,115 & 193,266,275,998\\

    \hline
    
\end{tabular}
    \caption{The size comparison between APTAs, 3DFAs and dDFAs for parity game solving. Some of the missing data for APTA is due to the fact that there is no enough RAM memory to build it.}
    \label{tab:parity-apta-3dfa}
\end{table}

In Table~\ref{tab:parity-apta-3dfa}, we can see that the number of positive and negatives samples grow exponentially, even if they eventually take up below 20\% of all samples.
On the other hand, we can see that for a fixed colour number, the sizes of dDFAs and 3DFAs grow constantly when increasing the word length by $1$.
For instance, for the colour number $6$, the size of 3DFA increases by at most 528 and eventually by 363, and dDFA by 363 when increasing the length by 1.
A huge save in the number of states representing the samples thus leads to a significantly better performance in solving the Min-DFA inference problems for parity games.

\subsection{Encoding of safety automata}
\label{app:parity}

In this section we outline the encoding employed for minimising safety automata from parity samples. In particular, we introduce additional contraints to reduce the space of the search of the DFA.
Because of the huge number of samples, we have tweaked the encodings described in the Section \ref{apd:sat-encoding} for efficiency, taking properties of separating automata for parity games into account.

Let \textbf{c} be the number of colours in the parity game, i.e. the set of the colours is $[0,\textbf{c}-1]$. We use the previously defined transition variables from a state $i$ to $j$ reading a letter (colour) $a$ as $e_{i, a, j}$ and the acceptance variables $f_i$, with $ 0 \leq i, j < n$ and $ 0 \leq a < \textbf{c}$.
Moreover, to make the SAT problem easier, we set $0$ as the initial state and $n-1$ as the sink state (for safety or co-safety acceptance).
Let $\ell$ be a positive integer.
We denote by $\odd{\ell} $ (respectively, $\even{\ell}$) the set of odd (respectively, even) numbers in the set $[0,\ell - 1]$.
By $\opp(\textbf{c})$ we denote the set of colours having the opponent priority as the highest colour, i.e. if $\textbf{c}-1$ is even, then $\opp(\textbf{c}) = \odd{\textbf{c}}$, and $\opp(\textbf{c}) = \even{\textbf{c}}$ otherwise.
The additional constraints are as follow:
\begin{enumerate}

    \item[P1] Initial state loops. Whenever the initial state $0$ reads a colour of the same parity as the highest one, then it loops over itself.
    Therefore, if $\textbf{c}-1$ is even, then we have $\bigwedge_{a \in \even{\textbf{c}}} e_{0, a, 0}$, otherwise $\bigwedge_{a \in \odd{s}} e_{0, a, 0}$ holds.
    
    \item[P2] Initial state outgoing transitions. Whenever the initial state $0$ reads a colour of the opponent parity as $\textbf{c}-1$, it reaches a different state than $0$ and $n-1$.
    Hence, we have $\bigwedge_{a \in \opp(\textbf{c})}(\bigvee_{0 < i < n-1} e_{0, a, i})$.

    \item[P3] Sink state incoming transitions\footnote{This constraint may conflict with the requirement R4 for the representative DFA. So, we need to drop those constraints of the representative DFA for state $n-1$ that require the incoming transition to state $n-1$ to be over the maximal letter.}. Whenever a state $i\neq n-1$ reads a colour of the same parity as the highest one, then it cannot reach the sink state $n-1$. The constraint is formalised as $\bigwedge_{0 \leq i < n-1} (\bigwedge_{a \not\in\opp(\textbf{c})}\neg e_{i, a, n-1})$.
    
    \item[P4] Reset transitions. Whenever a state $i\neq n-1$ reads the highest colour, then it reaches the initial state $0$.
    Then, we have that $\bigwedge_{0 \leq i < n-1} e_{i, \textbf{c}-1, 0}$ holds.

    \item[P5] Sink state loops. The sink state $n-1$ makes exception to the previous rule, it can only loop on itself.
    Formally, $\bigwedge_{0 \leq a < \textbf{c}} e_{n-1, a, n-1}$.

    \item[P6] No loops on opponent colours. No states but the sink can loop whenever reading a colour of the opponent parity as the highest one.
    Then, we have $ \bigwedge_{0 \leq i < n-1}(\bigwedge_{a \in \opp(\textbf{c})} \neg e_{i, a, i}$).
    
    \item[P7] Acceptance. If the highest colour is even, we look for a safety DFA (the sink state is the only rejecting one), otherwise we build a co-safety DFA (the only accepting state is the sink).
    Formally, if $\textbf{c}-1$ is even, then $(\bigwedge_{0 \leq i < n-1} f_{i} )\land \neg f_{n-1}$, otherwise $(\bigwedge_{0 \leq i < n-1} \neg f_{i}) \land f_{n-1}$.
    
\end{enumerate}

For best performance of parity game solving problems, it is better to turn on the safety encoding option when running \tool.

\end{document}